\newtheorem{theorem}{Theorem}[section]
\newtheorem{lemma}[theorem]{Lemma}
\newtheorem{proposition}[theorem]{Proposition}
\newtheorem{corollary}[theorem]{Corollary}
\newtheorem{definition}[theorem]{Definition}
\newtheorem{example}{Example}[section]
\newtheorem{remark}[theorem]{Remark}
\begin{document}

\title{A General Method for Common Intervals}

\author{Ismael Belghiti \thanks{École Normale Supérieure de Paris, France. Email : \texttt{ismael.alaoui.belghiti@ens.fr}} ~~~and M. Habib \thanks{LIAFA,  CNRS  \&  Universit\'e Paris Diderot - Paris 7, France. Email: \texttt{habib@liafa.univ-paris-diderot.fr}
}}

\maketitle

\begin{abstract}

Given an elementary chain of vertex set $V$, 
seen as a labelling of $V$ by the set ${\{1,\ldots,n=|V|\}}$, 
and another discrete structure over $V$, say a graph $G$, 
the \textbf{problem of common intervals} is to compute the induced subgraphs $G[I]$, 
such that $I$ is an interval of $[1, n]$ and $G[I]$ 
satisfies some property $\Pi$ (as for example $\Pi=$ "being connected").
This kind of problems comes from comparative genomic in bioinformatics, 
mainly when the graph $G$ is  a chain or a tree \cite{heb_stoye, interval_tree, BBCR04}.

When the family of intervals is closed under intersection, we present here the  combination of  two approaches, 
namely the idea of potential beginning developed in \cite{uno_yag, revisit_uno} and the notion of generator as defined in \cite{BCMR08}. This yields
 a very simple generic algorithm to compute all common intervals, which gives optimal algorithms in various applications. 
For example in the case where $G$ is a tree, our framework yields the first linear time algorithms for the 
two properties: "being connected" and "being a path". In the case where $G$ is a chain, the problem is known as: 
\textbf{common intervals of two permutations} \cite{uno_yag}, our algorithm provides not only the set of all common 
intervals but also with some easy modifications a tree structure that represents this set.

\end{abstract}
{\bf Keywords:} connected intervals,  common intervals, graph algorithms,.

\section{Introduction}

All the graphs considered here are supposed to be finite, undirected, simple and  loopless.  
For such a graph $G$, we denote by $V(G)$ and $E(G)$ its vertex and edge sets respectively. Furthermore if $U$ is a subset of $V(G)$, we denote by $G[U]$
the induced subgraph of $G$.

The problem of finding the \textbf{common connected components} of two graphs was defined in \cite{BMLPV05}, as follows:
let $G_1$, $G_2$ be two graphs on the same vertex set $V$, find the maximal partition $\mathcal Q=\{V_1, \dots, V_k\}$ of $V$ such for that for every $i \in [1, k]$, $G_1[V_i]$ and $G_2[V_i]$ are connected. 
Of course this problem is polynomially tractable and some subcases are solvable in linear time (see \cite{BXHP08}).

In this paper we are particularly interested in the close problem of finding all the \textbf{common connected subsets} of two graphs:
let $G_1$, $G_2$ be two graphs on the same vertex set $V$, find all the subsets $U \subset V$ such that 
$G_1[U]$ and $G_2[U]$ are connected.


More precisely we mainly study the particular case where $G_1$ is a elementary chain, 
seen as a labelling of $V$ by $\{1,\ldots,n=|V|\}$,
and $G_2$ is a graph $G$ with  $V(G)=\{1, \dots n \}$, the previous problem becomes the \textbf{problem of common intervals}. That is to compute the induced subgraphs $G[I]$, such that $I$ is an interval of $[1, n]$ and $G[I]$ satisfies some property $\Pi$ (as for example $\Pi=$ "being connected").
This kind of problems appears in Biology from comparative genomic, in a more specific case when the graph $G$ is  a chain or a tree \cite{heb_stoye, interval_tree, BBCR04}.

Combining two approaches namely the idea of potential beginning developed in \cite{uno_yag, revisit_uno} and the notion of generator as defined in \cite{BCMR08},
we succeed to a obtain a very simple generic algorithm which yields optimal algorithms in various applications. For example in the case where $G$ is a tree, our framework yields the first linear time algorithms for the two properties: "being connected" and "being a path".

Furthermore in the particular  case where $G$ is a chain, we deal with common intervals of two permutations, although some linear time algorithms already exist \cite{uno_yag, revisit_uno, BCMR08}, our framework yields very simple linear time algorithms that compute non only the common intervals, but also the associated tree decomposition.

In this paper we will first present the general framework, which deals with  families of intervals closed by intersection and then describe how the generic algorithms can be specialized for some applications. Due to space constraints we will not develop in details all these applications.

\section{General Framework for families closed under intersection}

In the sequel, we only consider families of intervals closed under intersection. 
In other words, we will consider families $\mathcal{F}$ of intervals such that:
if two intervals $I_1, I_2 \in \mathcal{F}$ intersect then their intersection
$I_1 \cap I_2$ is also in $\mathcal{F}$. For example, in the cases where $G$ is a tree and $\Pi=$ "being connected" or 
$\Pi=$ "being a path", the resulting families are closed by intersection. But it is not always true, as for the case
where $G$ is a graph and $\Pi=$ "being connected", and to manage this case we need to extend the framework presented here, see \cite{BH13}.

In the whole section, we assume by convention that the considered
families of intervals contain all the singletons of their ground set.
Let us now describe a generic algorithm to compute 
a convenient representation for these families and another one to enumerate their elements.
These algorithms will be specialized different ways in the section \ref{applications},
according to the particular combinatorial structures we consider.

\subsection{Representation by a generator}

\cite{BCMR08} introduced the notion of \textit{generator} to 
represent in linear space families of intervals closed under intersection:

\begin{definition}[Generator]
A generator of a family $\mathcal{F}$ of intervals over $\{1,\ldots, n\}$ closed under intersection is a couple
of vectors $(L,R)$ such that:
\begin{itemize}
   \item $\forall x \in \{1,\ldots,n\}, R[x] \geq x$
   \item $\forall y \in \{1,\ldots,n\}, L[y] \leq y$
   \item $[x,y] \in \mathcal{F} \Longleftrightarrow R[x] \geq y$ and  $L[y] \leq x$ 
\end{itemize}
\end{definition}

The following lemma shows that the families of intervals closed under intersection do admit such
a representation.

\begin{lemma}[\cite{BCMR08} Existence of a representation by generator]
Let $\mathcal{F}$ be a family of intervals closed under intersection.
There exists a generator that represents the family $\mathcal{F}$.
\begin{proof}
Let $maxEnd[x]$ be the maximum end of an interval of $\mathcal{F}$ starting at $x$ 
and let $minBeg[y]$ be the minimum beginning of an interval of $\mathcal{F}$ ending at $y$.
$(maxEnd, minBeg)$ is a generator of $\mathcal{F}$.
\end{proof}
\end{lemma}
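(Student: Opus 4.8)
The plan is to verify directly that the pair $(maxEnd, minBeg)$ exhibited in the statement satisfies the three defining properties of a generator. First I would record that both vectors are well defined: by the standing convention $\mathcal{F}$ contains every singleton $[x,x]$ of its ground set $\{1,\ldots,n\}$, so the set of intervals of $\mathcal{F}$ starting at a given $x$ (resp.\ ending at a given $y$) is nonempty, and hence $maxEnd[x]$ and $minBeg[y]$ exist. This same observation immediately gives the first two bullets: the singleton $[x,x]$ witnesses $maxEnd[x]\ge x$, and $[y,y]$ witnesses $minBeg[y]\le y$.

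For the third bullet, fix a pair $x\le y$ (the only case in which $[x,y]$ denotes an interval). The forward implication is a direct unwinding of the definitions: if $[x,y]\in\mathcal{F}$, then $[x,y]$ is an interval of $\mathcal{F}$ starting at $x$ and ending at $y$, so $maxEnd[x]\ge y$ and $minBeg[y]\le x$.

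The substantive direction is the converse, and this is exactly where closure under intersection enters. Assuming $maxEnd[x]\ge y$ and $minBeg[y]\le x$, pick by definition an interval $I_1=[x,\,maxEnd[x]]\in\mathcal{F}$ and an interval $I_2=[minBeg[y],\,y]\in\mathcal{F}$. From the chain $minBeg[y]\le x\le y\le maxEnd[x]$ one sees that $I_1$ and $I_2$ overlap and that $I_1\cap I_2=[x,y]$, which is a genuine (nonempty) interval because $x\le y$. Closure of $\mathcal{F}$ under intersection then gives $[x,y]=I_1\cap I_2\in\mathcal{F}$, which completes the equivalence and the proof.

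I do not expect a real obstacle here; the one point that deserves care is noticing that the two extremal witnessing intervals always meet and that their intersection is precisely $[x,y]$ — this is what the inequality chain $minBeg[y]\le x\le y\le maxEnd[x]$ provides, and it is the place where the hypothesis $x\le y$ is genuinely used. Everything else is routine bookkeeping, and the well-definedness of $maxEnd$ and $minBeg$ is exactly what the singleton convention is there to guarantee.
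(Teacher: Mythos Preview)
Your argument is correct and is precisely the approach the paper takes: it exhibits the same pair $(maxEnd,minBeg)$, and you simply spell out the verification that the paper leaves implicit. In particular, your use of the singleton convention for well-definedness and of closure under intersection (via the chain $minBeg[y]\le x\le y\le maxEnd[x]$) for the nontrivial direction of the third bullet is exactly what the paper's one-line proof is gesturing at.
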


Notice that this representation is particularly useful when we want to consider
the intersection of several families of intervals closed under intersection:

\begin{lemma}[\cite{BCMR08} Generators and intersection of families]
If $\mathcal{F}_1$ and $\mathcal{F}_2$ are two families of intervals closed under intersection, 
$\mathcal{F}_1$ being represented by the generator $(L_1,R_1)$ and $\mathcal{F}_2$ by the generator
$(L_2,R_2)$, then $\mathcal{F}_1 \cap \mathcal{F}_2$ is represented by the generator $(L,R)$ defined by:
\begin{enumerate}
\item $\forall x \in \{1,\ldots,n\}, R[x] = min(R_1[x],R_2[x])$
\item $\forall y \in \{1,\ldots,n\}, L[y] = max(L_1[y],L_2[y])$
\end{enumerate} 
\end{lemma}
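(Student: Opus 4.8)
The plan is to check directly the three defining clauses of the notion of generator for the pair $(L,R)$ relative to the family $\mathcal{F}_1 \cap \mathcal{F}_2$. Before that, I would record one preliminary observation, which is the only step easy to overlook: the notion of generator is only defined for a family of intervals closed under intersection (and, by the standing convention, containing all singletons), so one must first check that $\mathcal{F}_1 \cap \mathcal{F}_2$ has this property. This is immediate: if $I_1, I_2 \in \mathcal{F}_1 \cap \mathcal{F}_2$ intersect, then $I_1 \cap I_2 \in \mathcal{F}_1$ and $I_1 \cap I_2 \in \mathcal{F}_2$ by closure of each family, hence $I_1 \cap I_2 \in \mathcal{F}_1 \cap \mathcal{F}_2$; and every singleton of $\{1,\ldots,n\}$ lies in both $\mathcal{F}_1$ and $\mathcal{F}_2$, hence in their intersection. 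So the statement ``$(L,R)$ represents $\mathcal{F}_1 \cap \mathcal{F}_2$'' is well posed.

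The first two clauses are instantaneous. From $R[x] = \min(R_1[x], R_2[x])$ and $R_1[x] \geq x$, $R_2[x] \geq x$ we get $R[x] \geq x$ for all $x$; symmetrically, from $L[y] = \max(L_1[y], L_2[y])$ and $L_1[y] \leq y$, $L_2[y] \leq y$ we get $L[y] \leq y$ for all $y$.

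For the third clause, fix $x \leq y$ and chase equivalences. By definition of the intersection of families, $[x,y] \in \mathcal{F}_1 \cap \mathcal{F}_2$ iff $[x,y] \in \mathcal{F}_1$ and $[x,y] \in \mathcal{F}_2$. Applying the generator characterization to $(L_1,R_1)$ and to $(L_2,R_2)$, this amounts to the conjunction of $R_1[x] \geq y$, $L_1[y] \leq x$, $R_2[x] \geq y$, $L_2[y] \leq x$. Regrouping and using $\min(a,b) \geq y \iff (a \geq y \wedge b \geq y)$ and $\max(a,b) \leq x \iff (a \leq x \wedge b \leq x)$, this conjunction is exactly $\min(R_1[x],R_2[x]) \geq y$ and $\max(L_1[y],L_2[y]) \leq x$, i.e. $R[x] \geq y$ and $L[y] \leq x$. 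Hence $(L,R)$ is a generator of $\mathcal{F}_1 \cap \mathcal{F}_2$.

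As for the main obstacle: honestly there is none of substance — the argument is a short Boolean manipulation — and the only genuine point of care is the preliminary closure check noted above. It is worth remarking (though not needed for the statement, which only asks for \emph{some} generator) that the generator produced here is in general not the canonical $(maxEnd, minBeg)$ pair of the intersection family: $\min(R_1[x],R_2[x])$ can strictly exceed the largest $y$ with $[x,y] \in \mathcal{F}_1 \cap \mathcal{F}_2$ when the $L$-conditions fail earlier. This non-uniqueness is exactly what makes the elementwise $\min$/$\max$ formulas legitimate and convenient.
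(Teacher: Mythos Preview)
Your proof is correct; the paper itself does not prove this lemma but merely states it with a citation to \cite{BCMR08}. Your direct verification of the three generator clauses (plus the preliminary closure check) is the natural argument, and your closing remark about non-uniqueness of generators is a nice observation that the paper does not make explicit.
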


\begin{figure}[!h]
\begin{center}
\begin{minipage}[c]{0.4\textwidth}
\includegraphics[scale=0.6]{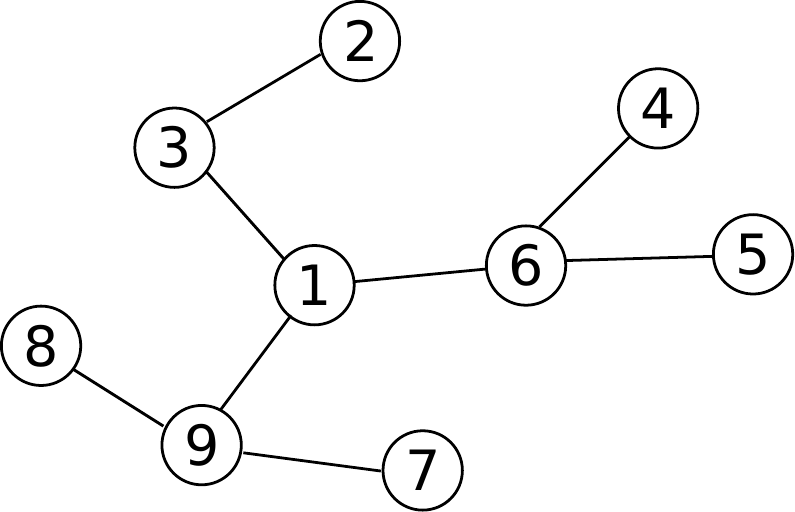}
\end{minipage}
\begin{minipage}[c]{0.4\textwidth}
   \begin{tabular}{| c | c | c | c | c | c | c | c | c | c | }
      \hline
      \empty & 1 & 2 & 3 & 4 & 5 & 6 & 7 & 8 & 9 \\
      \hline
      R & 9 & 4 & 4 & 7 & 7 & 7 & 9 & 9 & 9 \\ 
      \hline
      L & 1 & 1 & 1 & 3 & 4 & 1 & 6 & 7 & 1 \\
      \hline
   \end{tabular}
\end{minipage}
\caption{A labelled tree $T$ with, on its right, an example of generator representing the family of the intervals
$I$ such that $T[I]$ is connected.
For example, we have $R[1] = 9$ and $L[6] = 1$ thus, since $R[1] \geq 6$ and $L[6] \leq 1$, $[1,6]$ is
in this family.}
\end{center}
\end{figure}

\subsection{\label{potBeg}Potential Beginnings and Right-Splitters}

Let $\mathcal{F}$ be a family of intervals over $\{1,\ldots,n\}$ closed under intersection.
An element of $\{1,\ldots,n\}$ will be called a \textit{vertex}. 
In our algorithms, we will consider the vertices $y$ in increasing order and 
we will be interested in the beginnings of the intervals of $\mathcal{F}$ 
ending at $y$. When we have considered the vertices $\{1, \ldots, y\}$, 
it is natural to only keep the vertices $x$ that could be the beginning of an interval of 
$\mathcal{F}$ ending after $y$. 
To capture this idea let us introduce a notion
of \textit{potential beginning} (with respect to $y$), that relies on a simple property
such $x$ have to satisfy, and symmetrically a notion of \textit{potential end} such that:
$[x,y]$ is in $\mathcal{F}$ iff $x$ is a potential beginning of $y$
and $y$ is a potential end of $x$.

\begin{example}
Assume that we are given a permutation $P$ over $\{1,\ldots,n\}$
and that we are interested in the family $\mathcal{F}$ of intervals 
$[x,y] \subset \{1,\ldots,n\}$ such that $P([x,y]) = [x,y]$
(remark that $\mathcal{F}$ is closed under intersection).
Defining the potential beginnings of $y$ as the $x \leq y$ such 
that $x \leq \min P([x,y])$ and the potentials ends of $x$ as the $y \geq x$
such that $y \geq \max P([x,y])$, it is straightforward to check that: 
$[x,y]$ is in $\mathcal{F}$ iff $x$ is a potential beginning of $y$
and $y$ is a potential end of $x$.
\end{example}

We thus introduce the following definition:

\begin{definition}
A \textit{couple of potentiality} for the family $\mathcal{F}$ is a couple $(potBeg,potEnd)$ such that:
\begin{itemize}
	\item $\forall x,y\in \{1,\ldots,n\}, potBeg(y) \subset \{1,\ldots,y\}$ and $potEnd(x) \subset \{x,\ldots,n\}$
	\item $\forall y \in \{2,\ldots,n\}$,   $potBeg(y) \subset potBeg(y-1)\cup \{y\}$.
	\item $\forall x \in \{1,\ldots,n-1\}$, $potEnd(x) \subset potEnd(x+1)\cup \{x\}$.
	\item $\forall 1 \leq x \leq y \leq n, [x,y]\in\mathcal{F} \Leftrightarrow (x\in potBeg(y)$ and $ y\in potEnd(x))$.
\end{itemize} 

The elements of $potBeg(y)$ are called the \textit{potential beginnings} of $y$
and the elements of $potEnd(x)$ are called the \textit{potential ends} of $x$.
\end{definition}

Remark that we want the notion of \textit{potential beginning} to be such that,
when we consider the vertices $y$ in increasing order, each vertex $x$ will be 
a \textit{potential beginning} during a certain time until it loses its \textit{potentiality}.

\begin{definition}
We define the \textit{right-splitter} of $x$, denoted $RSplitter[x]$, 
as the minimum $y>x$ such that $x \not\in potBeg(y)$ (if such an index $y$ does not exist,
we set $RSplitter[x]=\infty$). Symmetrically, we define the \textit{left-splitter} 
of $y$, denoted $LSplitter[y]$, as the maximum $x<y$ such that $y \not\in potEnd(x)$ 
(if such an $x$ does not exist, we set $LSplitter[y]=-\infty$)
\end{definition}

From the above definitions, we have a straightforward link between 
the notion of potential beginning and right-splitter:

\begin{proposition} [Link between potential beginnings and right-splitters]
For $x,y\in \{1,\ldots,n\}$ with $x \leq y$, 
$x$ is a potential beginning of $y$ iff $y < RSplitter[x]$.
Symmetrically: $y$ is a potential end of $x$ iff $x > LSplitter[y]$.

Hence, we have that $(LSplitter+1,RSplitter-1)$ is a generator of $\mathcal{F}$.
\end{proposition}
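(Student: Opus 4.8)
The plan is to first isolate a monotonicity property of the maps $potBeg$ and $potEnd$, and then read the proposition off from it together with the definitions of $RSplitter$ and $LSplitter$. Concretely, I would first show that if $x \in potBeg(y)$ and $x \le y' \le y$, then $x \in potBeg(y')$. This is a short downward induction: from $potBeg(y)\subset potBeg(y-1)\cup\{y\}$, if $x\in potBeg(y)$ and $x<y$ then $x\neq y$ forces $x\in potBeg(y-1)$, and iterating reaches $potBeg(x)$, which contains $x$ since the singleton $[x,x]$ lies in $\mathcal{F}$ by convention. The symmetric statement — if $y\in potEnd(x)$ and $x\le x'\le y$ then $y\in potEnd(x')$ — follows the same way from $potEnd(x)\subset potEnd(x+1)\cup\{x\}$.

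Next I would establish the two equivalences. For the forward implication of the first one, if $x\in potBeg(y)$ with $x\le y$, then by the monotonicity step no index $y'$ with $x<y'\le y$ satisfies $x\notin potBeg(y')$, so the least such index, $RSplitter[x]$, is either $\infty$ or strictly larger than $y$; in both cases $y<RSplitter[x]$. For the converse, suppose $y<RSplitter[x]$ and $x\le y$: if $x=y$ then $x\in potBeg(x)$ directly; if $x<y$ then, since $RSplitter[x]$ is by definition the least $y'>x$ with $x\notin potBeg(y')$, every $y'$ with $x<y'<RSplitter[x]$ — in particular $y'=y$ — has $x\in potBeg(y')$. The claim about potential ends and $LSplitter$ is proved by the mirror argument.

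Finally, to obtain the generator, put $R[x]=RSplitter[x]-1$ and $L[y]=LSplitter[y]+1$ (with $\infty-1=\infty$). Then $R[x]\ge x$ and $L[y]\le y$ are exactly the definitional facts $RSplitter[x]>x$ and $LSplitter[y]<y$, since $RSplitter[x]$ ranges over indices $>x$ (or is $\infty$) and $LSplitter[y]$ over indices $<y$ (or is $-\infty$). For the characterization axiom, for $x\le y$ the definition of a couple of potentiality gives $[x,y]\in\mathcal{F}$ iff $x\in potBeg(y)$ and $y\in potEnd(x)$, which by the two equivalences is the same as $y<RSplitter[x]$ and $x>LSplitter[y]$, i.e. $R[x]\ge y$ and $L[y]\le x$; hence $(L,R)$ satisfies all three conditions of the definition of a generator. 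I do not expect a genuine obstacle here — the whole argument is bookkeeping around the two containment properties of a couple of potentiality — the only points needing care being the base case $y'=x$ of the induction (where the singleton convention is used), the case $x=y$ in the equivalences, and keeping the $\pm\infty$ conventions consistent with the inequalities for $R$ and $L$.
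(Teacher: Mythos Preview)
Your proposal is correct. The paper itself offers no proof for this proposition, introducing it with ``From the above definitions, we have a straightforward link\ldots'' and leaving the verification to the reader; your argument supplies exactly the natural details one would write down, namely the downward monotonicity of $potBeg$ coming from $potBeg(y)\subset potBeg(y-1)\cup\{y\}$, the base case $x\in potBeg(x)$ furnished by the singleton convention, and the direct unpacking of the definitions of $RSplitter$, $LSplitter$, and generator.
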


Left and right splitters and potential extremities are somehow dual and give two different points of view of the same structural
properties.
During a sweep taking the vertices $y$ in increasing order, we will maintain the set \texttt{PotBeg}
of the potential beginnings of $y$ in a data structure depending on the application.

\begin{remark} 
Each vertex will be pushed once in this structure and thus removed at most once. 
The right-splitter of a vertex $x$ corresponds to the $y$ we are considering
when we remove $x$ from the set \texttt{PotBeg}.
\end{remark}

\begin{proposition} [Suffix property]

\label{suffix_prop}
The set of the beginnings
of the intervals of $\mathcal{F}$ ending at $y$ forms a suffix of $potBeg(y)$.
More precisely, this set is equal to $potBeg(y) \cap [LSplitter(y)+1,y]$.
\end{proposition}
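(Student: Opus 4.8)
The plan is to observe that, by the last bullet of the definition of a couple of potentiality, the set of beginnings of intervals of $\mathcal{F}$ ending at $y$ is exactly
$B_y := \{x \in potBeg(y) : y \in potEnd(x)\}$,
which is in particular a subset of $potBeg(y)$. So both assertions — that $B_y$ is a suffix of $potBeg(y)$ and that it equals $potBeg(y) \cap [LSplitter(y)+1, y]$ — reduce to understanding, for $x \in potBeg(y)$, when the condition $y \in potEnd(x)$ holds.

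The key step is a monotonicity property of $potEnd$: if $x \leq z \leq y$ and $y \in potEnd(x)$, then $y \in potEnd(z)$. I would prove this by induction on $z$ starting from $z=x$, the inductive step using the third bullet of the definition, $potEnd(z) \subset potEnd(z+1) \cup \{z\}$: any element of $potEnd(z)$ distinct from $z$ lies in $potEnd(z+1)$, and since every step of the induction is performed with $z < y$ (we never push $z$ beyond $y$), the element $y$ persists throughout. The dual statement for $potBeg$ holds symmetrically, but only the $potEnd$ version is needed here.

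From this the suffix part is immediate: if $x \in B_y$ and $x' \in potBeg(y)$ with $x \leq x'$, then $x \leq x' \leq y$ since $potBeg(y) \subset \{1,\ldots,y\}$, so the monotonicity property yields $y \in potEnd(x')$, whence $x' \in B_y$. Thus $B_y$ is upward closed inside $potBeg(y)$, i.e. it is a suffix of $potBeg(y)$. For the precise description, recall from the preceding proposition that $y \in potEnd(x)$ iff $x > LSplitter(y)$; therefore $B_y = \{x \in potBeg(y) : x > LSplitter(y)\} = potBeg(y) \cap [LSplitter(y)+1, y]$, the upper bound $y$ being redundant because $potBeg(y) \subset \{1,\ldots,y\}$.

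I do not anticipate a real obstacle: the statement is essentially a repackaging of the definition of a couple of potentiality together with the already-established link between potential ends and left-splitters. The only point requiring a line of care is keeping the induction in the monotonicity lemma confined to the range $[x,y]$, which is precisely what guarantees $y \neq z$ at every step and hence that $y$ is never the element dropped when passing from $potEnd(z)$ to $potEnd(z+1)$.
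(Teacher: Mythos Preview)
Your argument is correct. The paper does not give a proof of this proposition; it is stated as a direct consequence of the preceding proposition linking potential ends and left-splitters, and indeed the second half of your write-up (invoking $y\in potEnd(x)\Leftrightarrow x>LSplitter(y)$ to get $B_y=potBeg(y)\cap[LSplitter(y)+1,y]$) is exactly that one-line derivation.

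The monotonicity lemma you prove first is correct but redundant: once you have the explicit description $B_y=potBeg(y)\cap[LSplitter(y)+1,y]$, the suffix claim is immediate, since intersecting $potBeg(y)\subset\{1,\ldots,y\}$ with a right-anchored interval $[a,y]$ is by definition an upward-closed (i.e.\ suffix) subset. So your induction on $z$ reproves, in a special case, what the preceding proposition already packaged via $LSplitter$. There is no harm in it, but you could trim the proof to its last paragraph.
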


This last property will be used in particular in the enumeration of $\mathcal{F}$ (see Algorithm \ref{algo_enum}).
Moreover, it will be also useful in the computation of the tree-decomposition of the common intervals
of two permutations (see section~\ref{tree_dec_perm}).

\subsection{Generic algorithms}

We assume that we have exhibited a notion of \textit{potential beginning} and a 
notion \textit{potential end} as defined above. Under this assumption, we here
give very general method to deal with the family $\mathcal{F}$.
In the section \ref{applications} we will explicit how theses approaches can be applied
to specific combinatorial structures.

We first describe a generic algorithm, using two symmetrical sweeps, to compute a representation by generator 
for $\mathcal{F}$. Then we describe another one that enumerates all the elements of $\mathcal{F}$.

\subsubsection{A generic algorithm to compute a generator}

We give here a very simple algorithm to compute a generator representing $\mathcal{F}$.
Our algorithms proceed in two sweeps to compute the couple of vectors $(LSplitter, RSplitter)$.
Recall that this computation answers the problem since $(LSplitter+1,RSplitter-1)$ is a generator 
of $\mathcal{F}$. 
During the first sweep we consider the vertices in increasing order and we maintain the set of potential 
beginnings (of the current $y$) in order to compute the vector $RSplitter$. 
The second sweep is symmetrical: we consider the vertices in decreasing order and maintain the set
of potential ends in order to compute the vector $LSplitter$.

\vspace{1em}
\begin{function}[H]
   \DontPrintSemicolon
   \caption{Computation of the generator()}
   ComputeRightSplitter() \;
   ComputeLeftSplitter() \;
\end{function} 
\vspace{1em}

Using these 2 vectors one can check in constant time if a given interval is in $\mathcal{F}$ using the following function:

\vspace{1em}
\begin{function}[H]
\caption{isInFamily(x,y)}
   Return ($x > LSplitter[y]$ and $y < RSplitter[x]$)
\end{function}
\vspace{1em}

For simplicity, we will only explicit the computation of $RSplitter$ ($LSplitter$ being obtained symmetrically).
To do this computation, we consider the vertices $y$ in increasing order and 
we maintain the set \texttt{PotBeg} of the potential beginnings of the current $y$.

When we have considered the vertices $\{1, \ldots, y-1\}$ and computed the set
$potBeg(y-1)$, the update to $potBeg(y)$ can be done by removing from \texttt{PotBeg} 
the vertices $x$  that are not a potential beginning of $y$. 
Each time we remove a vertex $x$, we set $RSplitter[x]$ to $y$. 
After this sequence of removals, we add $y$ to our set \texttt{PotBeg}.

\vspace{1em}
\begin{algorithm}[H]

\DontPrintSemicolon
\caption{\label{algo_rs}Generic Computation of RSplitter}

RSplitter $\leftarrow$ [$\infty,\ldots,\infty$] \;
PotBeg $\leftarrow$ $Empty Set$\;
\For{y from 1 to n}
{
   \ForEach{$x$ in PotBeg that is not a potential beginning of $y$}
   {
      RSplitter[x] $\leftarrow$ y \;
      Remove $x$ from PotBeg
   }
   Add y to PotBeg
}
\end{algorithm}
\vspace{1em}

\begin{proposition}
Since each vertex is removed at most once,
if we can perform each removal in time $O(f(n))$ then the whole algorithm is in $O(n f(n))$.
\end{proposition}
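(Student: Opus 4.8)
The plan is to give an aggregate (amortized) analysis of Algorithm~\ref{algo_rs}. First I would split the total running time into three parts: (i) the initialization of the vector $RSplitter$ and of the set \texttt{PotBeg}; (ii) the $n$ executions of the body of the outer \textbf{for} loop, excluding the inner \textbf{foreach} (that is, the test that ends the inner loop plus the "Add $y$ to \texttt{PotBeg}" step); and (iii) the cumulative work spent inside the inner \textbf{foreach} over the whole run. Part (i) is clearly $O(n)$, and part (ii) is $n$ times the cost of one insertion plus constant bookkeeping, which in all our instantiations is at most $O(f(n))$ per iteration, hence $O(n\,f(n))$ in total.

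The heart of the argument is bounding part (iii). Here the key observation — already flagged in the Remark of Section~\ref{potBeg} — is that a vertex $x$ is added to \texttt{PotBeg} exactly once, during iteration $y=x$ of the outer loop, and once it has been removed it is never re-inserted. Consequently, summed over all iterations $y\in\{1,\ldots,n\}$, the inner \textbf{foreach} executes at most $n$ times altogether: each of its iterations removes one hitherto-present vertex, setting its $RSplitter$ entry in $O(1)$, and there are only $n$ vertices to remove. By hypothesis each such removal costs $O(f(n))$, so part (iii) contributes $O(n\,f(n))$ in total. Adding the three parts and using $f(n)=\Omega(1)$ yields $O(n)+O(n\,f(n))+O(n\,f(n))=O(n\,f(n))$.

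The only subtle point, and what I would emphasise, is precisely this amortization: a naive line-by-line reading of the two nested loops suggests a bound like $O(n\cdot|\texttt{PotBeg}|\cdot f(n))$, i.e. possibly $O(n^2 f(n))$, and the improvement comes from charging the cost of each inner-loop step to the distinct vertex it removes rather than to the outer iteration in which it happens to occur. A secondary, minor matter is to make sure the predicate "$x$ is not a potential beginning of $y$" controlling the inner loop is evaluated in $O(f(n))$ as well; in all the concrete data structures we use the vertices eligible for removal lie at one extremity of \texttt{PotBeg}, so this test is $O(1)$ and is absorbed either into the removal it precedes or into the $O(f(n))$ overhead already charged to the outer iteration.
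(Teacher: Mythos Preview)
Your argument is correct and follows exactly the approach the paper intends: the paper does not give a separate proof but embeds the entire justification in the statement itself (``since each vertex is removed at most once''), relying on the earlier Remark that each vertex is pushed once and popped at most once. Your write-up is simply a careful, fleshed-out version of this same amortized charging argument, and nothing in it deviates from or adds substantively to the paper's reasoning.
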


In all the specific cases studied in this paper, the set \texttt{PotBeg}
has a simple behaviour 
(for example, it  behaves like a stack in the case of the connected intervals
of a tree) and we can perform each removal in time $O(1)$.

\subsubsection{A generic algorithm to enumerate the intervals}

According to the \textit{Suffix property} (proposition \ref{suffix_prop}), we can easily enumerate
for each $y$ the beginnings of the intervals of $\mathcal{F}$ ending at $y$. To do this,
we consider the elements of \texttt{PotBeg} from right to left, until we find one that is not
such a beginning. We will therefore assume that we have a primitive function \texttt{PotBeg.left} 
that permits to go from one element of \texttt{PotBeg} to the one directly on its left. 
Remark that, whatever the data-structure we use for \texttt{PotBeg}, it is always possible
to use a supplementary doubly-linked list \texttt{L} with an external array indexed,
from $1$ to $n$, that indicates for each $x \in \{1,\ldots,n\}$ the corresponding node
(when it exists) in \texttt{L}. This additional structure permits to compute \texttt{PotBeg.left} 
in constant time. It should be noticed that in all the cases studied in this paper, we do not need to do this,
since the set \texttt{PotBeg} has a very simple behaviour in all these examples.
In this algorithm, we assume that  the while  condition can be checked easily ($O(1)$).

\vspace{1em}
\begin{algorithm}[H]
\caption{\label{algo_enum}Enumeration of the intervals}
PotBeg $\leftarrow$ $Empty Set$\;
\For{y from 1 to n}
{
   \ForEach{$x$ in PotBeg that is not a potential beginning of $y$}
   {
      Remove $x$ from PotBeg
   }
   Add y to PotBeg \;
	Let $x$ be the right-most vertex of PotBeg \;
	\While{$[x,y]$ is an interval of $\mathcal{F}$}
	{
		Output($[x,y]$) \;
		$x$ $\leftarrow$ PotBeg.left(x) \;
	}
}
\end{algorithm}
\vspace{1em}


Under the assumption that we have exhibited a notion of \textit{potential beginning}
and a notion of \textit{potential end} satisfying the above conditions, 
we can applied the two algorithms. To achieve a good complexity, we have to do
efficiently the updates of the data structures. 

Notice that, to test the while condition in Algorithm \ref{algo_enum}, we can first precompute the vector $LSplitter$
and then test in constant time the condition $ x > LSplitter[y] $
(indeed, since $x$ is in $potBeg(y)$, we only have to test if $y$ is
in $potEnd(x)$). Consequently:

\begin{proposition}
If we have a $O(f(n))$ implementation of Algorithm \ref{algo_rs},
we can derive from it a $O(f(n) + |\mathcal{F}|)$ algorithm to enumerate all the elements of $\mathcal{F}$.
\end{proposition}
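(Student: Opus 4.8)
The plan is to run the two symmetric sweeps first, and then a third, almost identical sweep that performs the enumeration, charging its extra work either against the already-computed splitter vectors or directly against the elements of $\mathcal{F}$ being output. Concretely, I would first call the generator computation: by hypothesis Algorithm~\ref{algo_rs} and its symmetric counterpart each run in $O(f(n))$, so in $O(f(n))$ total we obtain the vectors $RSplitter$ and $LSplitter$, hence a generator $(LSplitter+1,RSplitter-1)$ of $\mathcal{F}$, and in particular an $O(1)$ test for "$[x,y]\in\mathcal{F}$" via \texttt{isInFamily}.

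Next I would analyse Algorithm~\ref{algo_enum}. Its outer loop together with the inner removal loop and the insertion \texttt{Add y to PotBeg} are the same as in Algorithm~\ref{algo_rs}: the set \texttt{PotBeg} undergoes exactly the same sequence of insertions and removals, so the same data structure and the same amortised accounting yield $O(f(n))$ for this skeleton, provided we also supply \texttt{PotBeg.left} in $O(1)$ time. As remarked in the text, this is always possible by maintaining an auxiliary doubly-linked list with an array indexed from $1$ to $n$, updated in $O(1)$ per insertion/removal (and in the concrete applications it is not even needed). The only genuinely new work is the \texttt{while} loop. For its test I would use the following observation: when the loop reaches a vertex $x$, we have $x\in\texttt{PotBeg}=potBeg(y)$, so $x$ is already a potential beginning of $y$; hence, by the proposition linking potential beginnings and right-splitters, $[x,y]\in\mathcal{F}$ iff $y\in potEnd(x)$ iff $x>LSplitter[y]$, which is an $O(1)$ test against the precomputed vector $LSplitter$.

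It remains to bound the number of \texttt{while}-iterations, and this is where the structural input is used. By the \textbf{Suffix property} (Proposition~\ref{suffix_prop}), the set of beginnings of the intervals of $\mathcal{F}$ ending at $y$ equals $potBeg(y)\cap[LSplitter(y)+1,y]$, i.e.\ a suffix of $potBeg(y)$. Therefore, scanning \texttt{PotBeg} from its right-most vertex leftward, the predicate "$[x,y]\in\mathcal{F}$" holds on a contiguous initial run and then fails from there on; so the loop outputs exactly the intervals of $\mathcal{F}$ ending at $y$ and then terminates at the first failure (or when it runs off the left end of \texttt{PotBeg}). Since an interval $[x,y]\in\mathcal{F}$ forces $x\in potBeg(y)$ and $y\in potEnd(x)$, it is indeed enumerated at iteration $y$, and at no other iteration, so over all $y$ the number of successful iterations is $\sum_y|\{x:[x,y]\in\mathcal{F}\}|=|\mathcal{F}|$, with at most one terminating iteration per $y$, contributing $O(n)$. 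As running Algorithm~\ref{algo_rs} already takes $\Omega(n)$ time, we have $f(n)=\Omega(n)$, so this $O(n)$ is absorbed and the total is $O(f(n))+O(|\mathcal{F}|)=O(f(n)+|\mathcal{F}|)$.

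The main obstacle is not computational but conceptual: the only step that relies on the theory rather than on bookkeeping is the early termination of the \texttt{while} loop. Without the Suffix property one could not stop at the first $x$ with $[x,y]\notin\mathcal{F}$, and the per-$y$ cost could degrade to $\Theta(|potBeg(y)|)$, which would ruin the bound. I would therefore establish (or simply cite, as it is available above) Proposition~\ref{suffix_prop} first; everything else is reuse of the Algorithm~\ref{algo_rs} sweep verbatim together with the observation that membership in $\mathcal{F}$ becomes an $O(1)$ comparison once $LSplitter$ is known.
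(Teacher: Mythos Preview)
Your proposal is correct and follows essentially the same approach as the paper: precompute $LSplitter$ so that the \texttt{while} test in Algorithm~\ref{algo_enum} reduces to the $O(1)$ check $x>LSplitter[y]$, and rely on the Suffix property (Proposition~\ref{suffix_prop}) to guarantee that the leftward scan stops after exactly the intervals of $\mathcal{F}$ ending at $y$ plus one failing test. You have simply made explicit the accounting (the $O(n)$ failing tests being absorbed since $f(n)=\Omega(n)$, and the \texttt{PotBeg.left} primitive) that the paper leaves implicit in its one-sentence justification preceding the proposition.
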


\begin{corollary}
If we can perform the removals from \texttt{PotBeg} and \texttt{PotEnd} in time $O(1)$, 
then we obtain optimal algorithms both for the computation of a generator and for 
the enumeration of $\mathcal{F}$.
\end{corollary}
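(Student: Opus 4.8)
The plan is to instantiate the two preceding propositions with $f(n)=1$ and then match the resulting running times against trivial output-size lower bounds.

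First, for the generator. By the proposition about Algorithm~\ref{algo_rs}, performing each removal in time $O(1)$ makes \texttt{ComputeRightSplitter} run in time $O(n\cdot 1)=O(n)$, provided the remaining per-iteration work does not dominate: the initialization $\texttt{RSplitter}\leftarrow[\infty,\ldots,\infty]$ is a single $O(n)$ pass, the bookkeeping of the outer \texttt{for} loop is $O(1)$ per iteration, and each of the $n$ insertions of $y$ into \texttt{PotBeg} is $O(1)$ in the data structures we use (a stack, or a stack augmented with the auxiliary indexed doubly-linked list described above). By the symmetric argument \texttt{ComputeLeftSplitter} is also $O(n)$, so the function computing the generator runs in $O(n)$ time; by the proposition linking potential beginnings and right-splitters, the pair $(\texttt{LSplitter}+1,\texttt{RSplitter}-1)$ it returns is indeed a generator of $\mathcal{F}$.

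Next, for the enumeration. The proposition stating that an $O(f(n))$ implementation of Algorithm~\ref{algo_rs} yields an $O(f(n)+|\mathcal{F}|)$ enumeration algorithm gives, with $f(n)=1$, a bound of $O(n+|\mathcal{F}|)$. Concretely, in Algorithm~\ref{algo_enum} the outer loop together with all the removals costs $O(n)$ as above, while by the Suffix property (Proposition~\ref{suffix_prop}) the \texttt{while} loop performs, over all values of $y$, exactly one iteration per interval of $\mathcal{F}$, hence $O(|\mathcal{F}|)$ iterations in total; each iteration is $O(1)$, since the membership test uses the precomputed \texttt{LSplitter} and \texttt{PotBeg.left} is constant time. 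The total is therefore $O(n+|\mathcal{F}|)$.

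Finally, optimality. A generator consists of the two vectors $L$ and $R$, which have $\Theta(n)$ entries, so any algorithm producing it needs $\Omega(n)$ time, and $O(n)$ is thus optimal. An enumeration algorithm must output every element of $\mathcal{F}$, hence needs $\Omega(|\mathcal{F}|)$ time; moreover, by our standing convention $\mathcal{F}$ contains all $n$ singletons, so $|\mathcal{F}|\geq n$ and $O(n+|\mathcal{F}|)=O(|\mathcal{F}|)$, matching the lower bound. The only step requiring genuine care — the main obstacle — is verifying that the non-removal parts of the two algorithms really are negligible: that the $\infty$-initialization is an $O(n)$ one-off, that insertions into \texttt{PotBeg} (and the maintenance of the auxiliary list giving \texttt{PotBeg.left}) are $O(1)$, and that the charging of \texttt{while}-loop iterations to elements of $\mathcal{F}$ is tight; once these are checked, the corollary follows immediately from the two previous propositions.
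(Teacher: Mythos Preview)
Your argument is correct and is exactly the route the paper intends: the corollary is stated there without proof, as an immediate consequence of the two preceding propositions together with the obvious output-size lower bounds, and you have simply spelled this out. One small notational slip: in the second proposition the symbol $f(n)$ denotes the \emph{total} running time of Algorithm~\ref{algo_rs}, not the per-removal cost, so when removals are $O(1)$ you should instantiate it with $f(n)=n$ (not $f(n)=1$) to get $O(n+|\mathcal{F}|)$; your concrete re-derivation in the next sentence is correct, so this does not affect the substance.
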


\section{\label{applications}Applications}

In this section, we exhibit a non exhaustive list of applications of the framework 
introduced above. Recall that these methods can be applied when
 considering a family of intervals closed under intersection
for which we have defined a \textit{couple of potentiality}.
To apply the generic algorithms, we just have to specify how to 
update the data structure \texttt{PotBeg} (resp. \texttt{PotEnd}).

In particular we introduce the first algorithms to 
compute the intervals corresponding to subsets of nodes
in a tree $T$ for the two properties: ``being connected''
and ``being a path''.

As we will see, in all these applications, 
\texttt{PotBeg} has a very simple behaviour and
all the obtained algorithms are optimal.\\

\subsection{\label{citree}Connected Intervals of a Tree}

Let $T$ be a tree on vertex set $V = \{1, \ldots, n\}$. We denote $T_{\geq x}$ 
the subgraph of $T$ induced by $\{x, x+1, \ldots, n\}$, $T_{\leq y}$ the subgraph induced by $\{1,2,\ldots,y\}$,
and $T[x,y]$ the one induced by $[x,y] = \{x,x+1,\ldots,y\}$. We say that $[x,y] \subset V$ is a \textit{connected interval}
when $T[x,y]$ is connected and we denote $\mathcal{I}$ the set of connected intervals of $T$.

The problem of finding the connected intervals of a tree is both a generalization of 
the one of finding the common intervals of two permutations and a special case of the "Common Intervals of Tree" problem.
\cite{interval_tree} defined the common intervals of two trees $T_1$ and $T_2$ on vertex set $V$ 
as the subsets of $X \subset V$ such that both $T_1[X]$ and $T_2[X]$ are connected. 

Although there exists linear-time algorithms to compute the common intervals of two permutations, 
only $O(n^2)$ algorithms are known to compute a satisfactory representation of the common connected
subsets of two trees (notice that there could be an exponential number of such subsets).
When one of the tree is a path, the problem becomes equivalent to find the connected intervals of a tree 
(by renumbering the vertices in the order of the path).

We here address this special case where one of the tree is a path.
First remark that $\mathcal{I}$ is a family of intervals closed under
intersection. Using the general methods described above, we
give the first $O(n)$ algorithm that computes a convenient 
representation for $\mathcal{I}$ and the first $O(n + |\mathcal{I}|)$ 
algorithm that outputs all the intervals of $\mathcal{I}$. 


We can introduce for this problem a very simple notion of \textit{potential beginning} 
(resp. \textit{potential end}).

\begin{definition}[Potential beginning]
For a given $y \in V$, we define the potential beginnings for the end $y$ as the $x \leq y$ such that,
in $T_{\geq x}$, $x$ accesses all the elements of $\{x,\ldots,y\}$.
The potential ends of a vertex $x$ are defined symmetrically.
\end{definition}

\begin{theorem} [Characterization of the connected intervals]
$[x,y] \subset V$ is a connected interval iff $x \in potBeg(y)$ and $y \in potEnd(x)$.
\end{theorem}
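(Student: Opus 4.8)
The plan is to prove the two implications of the biconditional separately, using the definition of potential beginnings in terms of accessibility in the subgraphs $T_{\geq x}$.

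First I would establish the forward direction: if $[x,y]$ is a connected interval, then $x \in potBeg(y)$ and $y \in potEnd(x)$. Suppose $T[x,y]$ is connected. Since $[x,y] \subseteq \{x, x+1, \ldots, n\}$, the induced subgraph $T[x,y]$ is a subgraph of $T_{\geq x}$, and connectivity of $T[x,y]$ means that inside $T_{\geq x}$ the vertex $x$ reaches every vertex of $\{x, \ldots, y\}$ (via paths entirely inside $[x,y]$, hence entirely inside $T_{\geq x}$). Thus $x \in potBeg(y)$; by the symmetric argument using $T_{\leq y}$, we get $y \in potEnd(x)$.

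The harder direction is the converse: assume $x \in potBeg(y)$ and $y \in potEnd(x)$, and show $T[x,y]$ is connected. From $x \in potBeg(y)$ we know that in $T_{\geq x}$ there is, for each $v \in \{x, \ldots, y\}$, a path from $x$ to $v$ using only vertices $\geq x$. From $y \in potEnd(x)$ we know that in $T_{\leq y}$ there is, for each $v \in \{x, \ldots, y\}$, a path from $y$ to $v$ using only vertices $\leq y$. The obstacle is that these two sets of witnessing paths need not individually lie inside the interval $[x,y]$: a path realizing accessibility in $T_{\geq x}$ could use vertices larger than $y$, and a path in $T_{\leq y}$ could use vertices smaller than $x$. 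The key step is to eliminate such detours. I would fix a vertex $v \in \{x,\ldots,y\}$ and consider the unique path $\pi$ from $x$ to $v$ in the tree $T$ (trees have unique paths, which is exactly why this setting is more rigid than general graphs). Since $x$ accesses $v$ within $T_{\geq x}$, this unique path $\pi$ — being the only $x$–$v$ path in $T$ — must already be the one certified, so every vertex on $\pi$ is $\geq x$. Symmetrically, consider the unique path from $y$ to $v$; every vertex on it is $\leq y$. Now I would argue that the unique $x$–$v$ path in $T$ actually has all vertices $\leq y$ as well: intuitively, because $v$ lies between $x$ and $y$ and is accessed from $y$ through vertices $\leq y$, any excursion above $y$ on the $x$–$v$ path would have to re-enter the region $\leq y$, forcing a cycle in the tree. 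Concretely, I would look at where the $x$–$v$ path and the $y$–$v$ path meet (their last common vertex with $v$, or the structure of the tripod on $x, y, v$ in $T$) and use the acyclicity of $T$ to conclude that the $x$–$v$ path is contained in $[x,y]$. Hence $x$ reaches every $v \in [x,y]$ within $T[x,y]$, so $T[x,y]$ is connected.

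I expect the tree-structure argument in the converse — showing that unique paths between interior points of the interval cannot leave the interval — to be the main point requiring care; it is where closedness under intersection and the rigidity of trees really enter, and it is presumably also what makes the data structure \texttt{PotBeg} behave like a stack in this application. Everything else is a direct unfolding of the definitions of $potBeg$, $potEnd$, and connectedness, together with Proposition~\ref{suffix_prop} and the generator correspondence already established, which guarantee the $potBeg$/$potEnd$ monotonicity conditions needed for the framework to apply.
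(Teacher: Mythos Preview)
Your plan is essentially the paper's proof. For the converse, the paper makes the tripod step you gesture at fully concrete: it first notes that the unique $x$--$y$ path lies in $[x,y]$ (being simultaneously in $T_{\geq x}$ and in $T_{\leq y}$), and then, for any $z\in[x,y]$, observes that the path from $z$ to this $x$--$y$ path is exactly the intersection of the $z$--$x$ path (all vertices $\geq x$) and the $z$--$y$ path (all vertices $\leq y$), hence lies in $[x,y]$. This is the precise content of your ``tripod on $x,y,v$'' remark and should replace your ``forcing a cycle'' heuristic, which as stated does not give a contradiction (a path that leaves $[x,y]$ and returns is still a path, not a cycle). The closing appeal to Proposition~\ref{suffix_prop} and generators is extraneous to this theorem.
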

\begin{proof}
If $[x,y]\in \mathcal{I}$ it is clear from the previous definition that $x \in potBeg(y)$ 
and $y \in potEnd(x)$. Reciprocally assume that $x \in potBeg(y)$ and $y \in potEnd(x)$.
First remark that the path $P$ between $x$ and $y$ has all its values in $[x,y]$ since it 
is both in $T_{\geq x}$ and $T_{\leq y}$. Moreover for all $z \in [x,y]$, the path joining 
$z$ to $P$ is the intersection of the path between $z$ and $x$ and the one between $z$ and $y$ 
therefore it has all its values in $[x,y]$. From this we derive that $[x,y]$ is a connected interval.
\end{proof}

The corresponding notions of right-splitter and left-splitter are then:

\begin{proposition} [Splitters]
For a given $x$, the right-splitter of $x$ is the minimum
$z>x$ such that there exists a vertex $x' < x$ on the path between $x$ and $z$. 
The notion of left-splitter is symmetrical.
\end{proposition}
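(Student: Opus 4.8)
The plan is to translate each side of the claimed equality into a statement about the unique paths of the tree $T$ and then match the two. First I would reformulate membership in $potBeg$: since $T$ is acyclic, ``in $T_{\ge x}$, $x$ accesses $w$'' holds if and only if the path between $x$ and $w$ in $T$ uses only vertices $\ge x$. Hence $x\in potBeg(y)$ exactly when, for every $w\in\{x,\ldots,y\}$, the path between $x$ and $w$ in $T$ avoids every vertex smaller than $x$; equivalently, $x\notin potBeg(y)$ exactly when some $w\in\{x,\ldots,y\}$ has its path to $x$ in $T$ meeting a vertex $x'<x$ (and then necessarily $w\neq x$, since the path from $x$ to itself is trivial).

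Next I would set $Z(x)$ to be the minimum $z>x$ such that the path between $x$ and $z$ in $T$ contains a vertex $x'<x$ (and $Z(x)=\infty$ if no such $z$ exists), and prove $RSplitter[x]=Z(x)$ by two inequalities. For $RSplitter[x]\le Z(x)$: if $z:=Z(x)$ is finite then $x<z$, so $z\in\{x,\ldots,z\}$, and since the path between $x$ and $z$ meets a vertex $<x$, the reformulation gives $x\notin potBeg(z)$, whence $RSplitter[x]\le z$. For $RSplitter[x]\ge Z(x)$: take any $y$ with $x<y<Z(x)$; if $x\notin potBeg(y)$ then some $w\in\{x,\ldots,y\}$, necessarily $w>x$, has the path between $w$ and $x$ passing through a vertex $<x$ --- but then $w$ belongs to the set defining $Z(x)$ while $x<w\le y<Z(x)$, contradicting the minimality of $Z(x)$. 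So $x\in potBeg(y)$ for every $x<y<Z(x)$, i.e.\ $RSplitter[x]\ge Z(x)$. The degenerate case is handled the same way: if no $z>x$ has a path to $x$ through a smaller vertex, then $x\in potBeg(y)$ for all $y>x$, so $RSplitter[x]=\infty=Z(x)$, in accordance with the stated convention.

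The statement for $LSplitter[y]$ then follows for free from the label-reversal $i\mapsto n+1-i$: this exchanges $T_{\ge x}$ with $T_{\le y}$, $potBeg$ with $potEnd$, and right-splitters with left-splitters, so applying the identity just proved in the relabelled tree gives the symmetric claim.

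I do not expect a genuine obstacle here. The only step requiring care is the opening reformulation --- that accessibility inside the induced subgraph $T_{\ge x}$ coincides with ``the unique tree path stays $\ge x$'' --- which uses in an essential way that $T$ is a tree (it fails for general graphs, which is exactly why the framework has to be extended in \cite{BH13}). Everything afterwards is just a comparison of two minima.
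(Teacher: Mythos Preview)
Your argument is correct. The paper actually states this proposition without proof, treating it as an immediate unpacking of the definitions of $potBeg$ and $RSplitter$ in the tree setting; you have supplied the details that the authors left implicit, and your two-inequality comparison of $RSplitter[x]$ with $Z(x)$, together with the observation that accessibility in $T_{\ge x}$ coincides with the unique $T$-path staying $\ge x$, is exactly the intended content.
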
  

In this special case, \texttt{PotBeg} has a simple behaviour.

\begin{proposition} [Stack behaviour of \texttt{PotBeg}]
In this context, \texttt{PotBeg} behaves like a stack (when we consider the $y$ in increasing order).

More formally: If $x < x'$ are in $potBeg(y-1)$ then , if $x'$ is still in $potBeg(y)$ then so is $x$.
\end{proposition}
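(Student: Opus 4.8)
The claim is that if $x < x'$ are both in $potBeg(y-1)$ and $x'$ survives to $potBeg(y)$, then $x$ survives as well. The plan is to argue by contradiction: suppose $x \notin potBeg(y)$ but $x' \in potBeg(y)$. By the definition of potential beginning specialized to this setting (connected intervals of a tree), $x' \in potBeg(y)$ means that in $T_{\geq x'}$ the vertex $x'$ reaches every element of $\{x', \ldots, y\}$; equivalently, $T[x',y]$ is connected and $x'$ is in it. The failure $x \notin potBeg(y)$ means that in $T_{\geq x}$, the vertex $x$ does not reach all of $\{x, \ldots, y\}$; combined with the fact that $x \in potBeg(y-1)$ (so $x$ did reach all of $\{x,\ldots,y-1\}$ in $T_{\geq x}$), the only newly unreachable vertex must be $y$ itself. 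So the path in $T$ from $x$ to $y$ must leave $[x,y]$, i.e. it passes through some vertex $w < x$.

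Next I would locate where that path meets the (connected) set $[x',y]$. Since $x' \in [x,y]$ and the tree path from $x$ to $y$ passes through the connected subtree $T[x',y]$ — more carefully, consider the first vertex $t$ on the $x$-to-$y$ path that lies in $[x',y]$; such a $t$ exists because the endpoint $y$ is in $[x',y]$. The portion of the path from $t$ to $y$ lies entirely inside $[x',y]$ (by connectivity of $T[x',y]$ and uniqueness of paths in a tree: the $T$-path from $t$ to $y$ coincides with the path inside $T[x',y]$). Hence the detour through $w < x < x'$ happens strictly before $t$, so it lies on the path from $x$ to $t$, with $t \in [x',y] \subseteq [x,y]$. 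But now look at the path from $x'$ to $y$: I want to show it, too, must use a vertex $< x'$, contradicting $x' \in potBeg(y)$. The path from $x'$ to $y$ inside $T_{\geq x'}$ exists (since $x'\in potBeg(y)$), and it stays in $[x',y]$. The idea is to compare this with the path from $x'$ to $w$ (or to $x$): walking from $x'$ toward $y$ and from $x$ toward $y$, the two paths merge; the merge point is some $t' \in [x',y]$, and the $x$-side of the path before $t'$ contains $w < x$. Then the path from $x'$ to $x$ decomposes through $t'$, and... this is the delicate bit.

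The main obstacle is precisely the tree-geometry bookkeeping: one must argue cleanly that a "detour below $x$" on the $x$–$y$ path forces a corresponding "detour below $x'$" somewhere relevant, using only that $x' \in (x,y]$ and uniqueness of tree paths. The cleanest route is probably: let $m$ be the median (meeting vertex) of the three vertices $x$, $x'$, $y$ in the tree; since $x' \in potBeg(y)$, the $x'$–$y$ path stays in $[x',y]$, so $m \in [x',y]$, in particular $m \geq x' > x$. The $x$–$y$ path is the concatenation of the $x$–$m$ path and the $m$–$y$ path; the $m$–$y$ path lies in $[x',y] \subseteq [x,y]$, so the detour through $w<x$ lies on the $x$–$m$ sub-path. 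But the $x$–$m$ path is also (part of) the $x$–$x'$ path: indeed $m$ lies on the $x$–$x'$ path as well (median property). Hence the $x$–$x'$ path contains $w < x \le x'$, meaning $x \notin potBeg(x')$... which is not yet the contradiction I want. So instead I would push the argument symmetrically using potential \emph{ends}, or observe directly: since $x \in potBeg(y-1)$, the vertex $x$ reaches $x'$ in $T_{\geq x}$ via a path inside $[x, x']\subseteq[x,y-1]$, which is exactly the $x$–$x'$ tree path, so that path has \emph{no} vertex $<x$ — contradicting the previous sentence. That closes it.

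\begin{proof}
Suppose for contradiction that $x < x'$, both $x$ and $x'$ lie in $potBeg(y-1)$, $x' \in potBeg(y)$, but $x \notin potBeg(y)$. Since $x \in potBeg(y-1)$, in $T_{\geq x}$ the vertex $x$ accesses every element of $\{x,\ldots,y-1\}$; in particular the tree path $P_{x,x'}$ from $x$ to $x'$ lies entirely in $[x,x'] \subseteq [x,y-1]$, so every vertex of $P_{x,x'}$ is $\geq x$. Since $x \notin potBeg(y)$ but $x$ accessed all of $\{x,\ldots,y-1\}$ in $T_{\geq x}$, the vertex $y$ must be inaccessible from $x$ in $T_{\geq x}$, which forces the tree path $P_{x,y}$ from $x$ to $y$ to contain some vertex $w < x$.

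Let $m$ be the unique vertex common to the three pairwise tree paths among $x$, $x'$, $y$ (the median of $x,x',y$ in $T$). Because $x' \in potBeg(y)$, the path $P_{x',y}$ lies inside $[x',y]$; as $m \in P_{x',y}$, we get $m \geq x' > x$. Now $P_{x,y}$ is the concatenation of $P_{x,m}$ followed by $P_{m,y}$, and $P_{m,y} \subseteq P_{x',y} \subseteq [x',y]$ has all vertices $\geq x' > x$. Hence the vertex $w < x$ lies on $P_{x,m}$. But $m$ also lies on $P_{x,x'}$ (median property), so $P_{x,m}$ is a sub-path of $P_{x,x'}$, and therefore $w$ is a vertex of $P_{x,x'}$ with $w < x$. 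This contradicts the first paragraph, which showed every vertex of $P_{x,x'}$ is $\geq x$. Thus $x \in potBeg(y)$, and \texttt{PotBeg} behaves like a stack.
\end{proof}
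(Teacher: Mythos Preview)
Your argument is correct in its skeleton but takes a noticeably longer route than the paper, and it contains two small overstatements that you should fix. You assert that $P_{x,x'}$ lies in $[x,x']$ and that $P_{x',y}$ lies in $[x',y]$; neither is justified (and both can fail: e.g.\ in a star with center $5$ and leaves $1,2,3,4$, the path from $1$ to $3$ passes through $5\notin[1,3]$). What you actually need, and what does follow, is only that every vertex of $P_{x,x'}$ is $\geq x$ (since $x$ accesses $x'$ in $T_{\geq x}$) and every vertex of $P_{x',y}$ is $\geq x'$ (since $x'$ accesses $y$ in $T_{\geq x'}$). With those weaker statements your median/contradiction argument goes through unchanged.

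The paper's proof is direct and avoids the median machinery entirely: from $x\in potBeg(y-1)$ one has that $x$ accesses $x'$ in $T_{\geq x}$; from $x'\in potBeg(y)$ one has that $x'$ accesses $y$ in $T_{\geq x'}\subseteq T_{\geq x}$; concatenating, $x$ accesses $y$ in $T_{\geq x}$, so $x\in potBeg(y)$. Your approach is really the contrapositive of this same path-concatenation idea, unpacked through the tree median; it is valid, but the direct version is a two-line argument.
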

\begin{proof}
If $x'$ is still in $potBeg(y)$ then, since $x$ accesses $x'$ in $T_{\geq x}$ and 
$x'$ accesses $y$ in $T_{\geq x'}$, so a fortiori in $T_{\geq x}$, we have that $x$
accesses $y$ in $T_{\geq x}$ . We conclude that $x$ is a potential beginning for $y$.  
\end{proof}

This behaviour of \texttt{PotBeg} will be really useful in our algorithm since, as we will see, 
it is easy to maintain this stack. $RSplitter[x]$ is then simply obtained as the 
$y$ we are considering when $x$ is popped from the stack. 

If $x$ is a potential beginning of $y-1$, we can test if it is still one for $y$ 
by checking that $x$ accesses $y$ in $T_{\geq x}$. In other words, we only have 
to check that the path between $x$ and $y$ does not contain values less than $x$.

To do this test in constant time, we will ask for the minimum on the path between $x$ and $y$ to check
if it is less than $x$ or not.
We can get this minimum by computing the lowest common ancestor (LCA) of $x$ and $y$ in the \textit{Cartesian Tree} 
of $T$. In the RAM model, this Cartesian Tree can be computed in linear time and the LCA queries can
be answered in constant time with a linear precomputation. For further details on these data structures, 
please refer to \cite{cart_tree}. We thus assume that we have a function $MinOnPath(x,y)$ that outputs in time $O(1)$ the minimum 
on the path between $x$ and $y$.

From these remarks, we obtained the following $O(n)$ specialization of Algorithm \ref{algo_rs}:

\begin{algorithm}[H]

\caption{\label{algoci}ComputeRightSplitter}
\label{algoRB}
\DontPrintSemicolon

RSplitter $\leftarrow$ [$\infty,\ldots,\infty$] \;
PotBeg.push(1) \;
\For{y from 2 to n}
{
   \While{MinOnPath(PotBeg.top(),y)) <  PotBeg.top()}
   {
      RSplitter[PotBeg.top()] $\leftarrow$ y \;
      PotBeg.pop()
   }
   PotBeg.push(y)
}
\end{algorithm}

Since we can test the pop condition in constant time,
we can also implement Algorithm \ref{algo_enum} to run in time $O(n + |\mathcal{I}|)$.

\subsection{Paths in a Tree}

We have previously seen how to compute in linear time the intervals 
$I$ such that $T[I]$ is connected. We here give
a linear time algorithm to compute the intervals $I$ such that $T[I]$ 
is a path. We denote $\mathcal{P}$ this latter family.

The notion of \textit{potential beginnings} we will use is:

\begin{definition}
$x$ is a potential beginning of $y$ when both the following conditions are satisfied:
\begin{enumerate}
\item $x$ accesses all vertices of $[x,y]$ in $T_{\geq x}$ 
\item $T[x,y]$ is contained in a path of $T$.
\end{enumerate}
\end{definition}

We describe now the data structure we use to update efficiently \texttt{PotBeg}.
When we have computed $potBeg(y-1)$, we have to remove from \texttt{PotBeg} all
the $x$ that are not a potential beginning for $y$.
We proceed in two steps:
\begin{enumerate}
\item We remove from \texttt{PotBeg} all the $x$ that do not satisfy the former condition
\item Then we remove the remaining ones that do not satisfy the latter one.
\end{enumerate}

Remark that since the first condition is identical to the one of section \ref{citree},
we can perform the first step exactly as in Algorithm \ref{algoci} therefore
we will only describe the second step. The second condition will be called the 
\textit{alignment condition}.\\

Since \texttt{PotBeg} is contained in a path, we can
consider  the doubly-linked list \texttt{L} that contains the vertices of \texttt{PotBeg}
in the order in which they appear in the path. During the first step, we will
just do some removals from \texttt{L}, each vertex being removed in constant time.

\begin{proposition}
The set of vertices removed from \texttt{PotBeg} during the second step 
form a prefix of \texttt{PotBeg}.
\begin{proof}
If $[x,y]$ is not contained in a path then
neither is $[x',y]$ for $x'<x$.
\end{proof}
\end{proposition}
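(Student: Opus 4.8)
The plan is to reduce the statement to a single monotonicity property of the \emph{alignment condition}. First I would record that for any $x' \le x \le y$ we have $[x,y] \subseteq [x',y]$, so $T[x,y]$ is an induced subgraph of $T[x',y]$; consequently, if $T[x',y]$ is contained in some path $Q$ of $T$, then $T[x,y]$ is contained in the same $Q$ (just discard the surplus vertices, which lie on $Q$ anyway). Taking the contrapositive gives exactly what I need: if the alignment condition fails for the pair $(x,y)$, it also fails for every $(x',y)$ with $x' < x$. I would pause to make this airtight in the tree setting — since $T$ is acyclic, for any vertex set $S$ the induced graph $T[S]$ is a forest, so if $S$ is the vertex set of a path $Q$ of $T$ then $T[S]=Q$; hence ``$T[x,y]$ contained in a path of $T$'' is an honest predicate that is downward closed in $x$.

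With this in hand the proposition is short. After the first step, \texttt{PotBeg} holds exactly the vertices $x$ that satisfy the first condition for the current $y$ (condition 1 being the one already treated in Section~\ref{citree}), and the second step deletes from it precisely those remaining $x$ that additionally violate the alignment condition for $y$. By the monotonicity above, this deletion set is downward closed inside \texttt{PotBeg}: whenever $x$ is deleted and $x' \in \texttt{PotBeg}$ satisfies $x' < x$, the pair $(x',y)$ also violates the alignment condition, so $x'$ is deleted as well. A downward closed subset of a linearly ordered set is an initial segment, i.e. a prefix of \texttt{PotBeg} — which is precisely what allows the second step to be carried out as a run of removals from one end of the list \texttt{L}.

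I do not expect a real obstacle here; the content is entirely the ``shrink the interval from the left'' monotonicity. The only point that deserves a careful line is the order convention: the argument above uses the numerical order on $\{1,\dots,n\}$, whereas \texttt{L} stores the surviving potential beginnings in the order in which they occur along the common path containing them. So I would check that, on the set currently held in \texttt{PotBeg}, these two orders agree in the relevant direction; this reads off from the first condition (each surviving $x$ reaches $y$ inside $T_{\ge x}$, so the $x$--$y$ path stays within $\{x,\dots,y\}$, which pins down where smaller beginnings sit along that path). Granting this routine consistency check, the statement is just the monotonicity of ``being contained in a path'' under left-extension of the interval.
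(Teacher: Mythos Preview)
Your core argument is correct and is exactly the paper's: the one-line proof in the paper is precisely the monotonicity you spell out, namely that $[x,y]\subseteq[x',y]$ forces $T[x,y]$ to sit inside any path containing $T[x',y]$, whose contrapositive makes the set of alignment-violating $x$ downward closed, hence a prefix of \texttt{PotBeg} in the numerical order. Your final paragraph is unnecessary for the proposition (the ``prefix'' here is with respect to the integer order, as the paper's subsequent ``remove the minimum element of \texttt{PotBeg}'' confirms), and the conjectured agreement between numerical order and path order on \texttt{L} is neither needed nor asserted by the paper.
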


During this second step, we will remove the minimum element of \texttt{PotBeg}
while the alignment condition is not respected. To check this, 
we can test if the set of three vertices composed of the
two ends of $L$ and $y$ are aligned or not (this can be done in
constant time using a fixed number of LCA computations).
The insertion of $y$ into \texttt{L} can also be done easily.

We can thus perform each removal in constant time. Therefore
we can build a generator for $\mathcal{P}$ in linear time
and also enumerate $\mathcal{P}$ in optimal time.

\subsection{Closed intervals of a DAG}

Let $G$ be a directed acyclic graph (DAG) with vertex set $V(G) = \{1,\ldots,n\}$
and $m$ arcs.
A \textit{closed interval} is an integer interval $[x,y] \subset V(G)$ such that
all accessible vertices from a $z \in [x,y]$ are in $[x,y]$.
It is easy to prove that this family of intervals is closed under intersection.
Given an interval $[x,y] \subset V(G)$, we denote $Cl([x,y])$ the set
of vertices reachable from at least one of the vertices of $[x,y]$.

\begin{figure}[h]
\begin{center}
\includegraphics[scale=0.8]{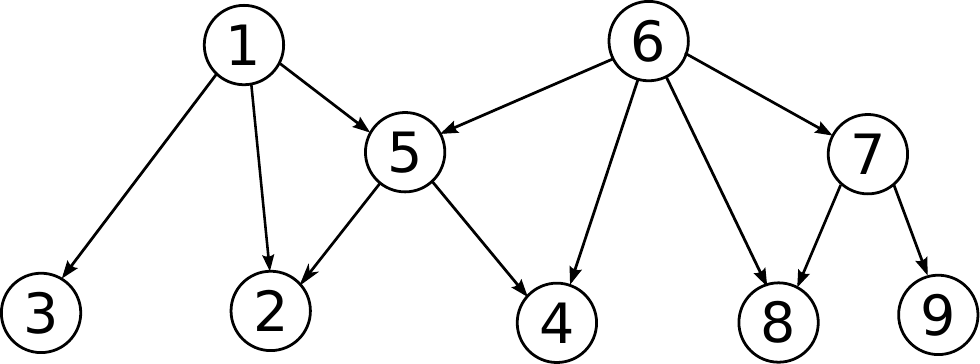}
\caption{An example of DAG on vertex set $\{1,\ldots,9\}$. 
$[2,5]$ is a closed interval whereas $[4,9]$ is not.}
\end{center}
\end{figure}

In this context, it is natural to define the notion of potential beginning
as follows:

\begin{definition} [Potential beginning]
For a given $y$, we say that $x \in \{1, \ldots, y\}$ is a potential.
beginning of $y$ when $x \leq \min Cl([x,y])$. (The notion
of potential end is defined symmetrically).
\end{definition}

As in the case of the connected intervals of a tree, we have the following properties:

\begin{lemma}
With this notion of potential beginning: 
\begin{enumerate}
\item \texttt{PotBeg} behaves like a stack when we consider
the $y$ in increasing order
\item $[x,y]$ is a closed interval iff $x \in potBeg(y)$ and $y \in potEnd(x)$.
\end{enumerate}

\begin{proof}
We first show that \texttt{PotBeg} behaves like a stack. 
Consider $x < x'$ two potential beginnings of $y-1$.
If $x$ is not a beginning for $y$ then there exists $z < x$
reachable from $y$. Since $z < x'$, $x'$ is not a potential beginning for $y$.

We now show the second property.
If $[x,y]$ is a closed interval then $Cl([x,y]) = [x,y]$, therefore
$x$ is a potential beginning of $y$ and $y$ is a potential end of $x$. Reciprocally,
if $x$ is a potential beginning of $y$ and $y$ is a potential end of $x$ then for all 
$z \in Cl[x,y]$, we have that $x \leq z \leq y$ thus $Cl([x,y]) = [x,y]$ and $[x,y]$
is a closed interval.

\end{proof}
\end{lemma}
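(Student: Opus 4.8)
The plan is to prove the two claims separately, exploiting the fact that the defining condition ``$x \leq \min Cl([x,y])$'' depends monotonically on the interval in a way that makes the stack behaviour transparent.

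For the stack property, I would take $x < x'$ both in $potBeg(y-1)$ and suppose $x'$ survives into $potBeg(y)$; I want to conclude $x$ survives too. Since $x'$ is a potential beginning of $y$, every vertex reachable from $[x',y]$ lies in $\{x', \ldots\}$, in particular is $\geq x' > x$. Now $Cl([x,y]) = Cl([x,x'-1]) \cup Cl([x',y])$. The second part is $\geq x' > x$ by the above. For the first part, note $[x,x'-1] \subseteq [x,y-1]$, so $Cl([x,x'-1]) \subseteq Cl([x,y-1])$, and since $x \in potBeg(y-1)$ we have $\min Cl([x,y-1]) \geq x$; hence $\min Cl([x,x'-1]) \geq x$ as well. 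Combining, $\min Cl([x,y]) \geq x$, so $x \in potBeg(y)$. (Alternatively, one can argue exactly as the paper's proof sketch does: if $x$ fails for $y$, some $z < x$ is reachable from $[x,y]$; that $z$ cannot be reached from $[x,x'-1]$ without violating $x \in potBeg(y-1)$, wait --- actually it could, so the clean argument is the containment one above; I will use the containment formulation since it is airtight. Actually the sketch's phrasing ``reachable from $y$'' is the key simplification: the only new vertex in $[x,y]$ versus $[x,y-1]$ is $y$ itself, so if $x$ loses potentiality the offending small vertex $z<x$ must be reachable from $y$, and then $z < x < x'$ shows $x'$ also loses it. I would present whichever is shortest.)

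For the characterization of closed intervals, the forward direction is immediate: if $[x,y]$ is closed then $Cl([x,y]) = [x,y]$, so $\min Cl([x,y]) = x \geq x$ giving $x \in potBeg(y)$, and symmetrically $\max Cl([x,y]) = y \leq y$ giving $y \in potEnd(x)$. For the converse, suppose $x \in potBeg(y)$ and $y \in potEnd(x)$. Take any $z \in Cl([x,y])$. From $x \in potBeg(y)$ we get $z \geq \min Cl([x,y]) \geq x$; from $y \in potEnd(x)$ (whose symmetric defining inequality is $y \geq \max Cl([x,y])$) we get $z \leq \max Cl([x,y]) \leq y$. Hence $Cl([x,y]) \subseteq [x,y]$. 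Conversely $[x,y] \subseteq Cl([x,y])$ trivially (each vertex reaches itself), so $Cl([x,y]) = [x,y]$, i.e.\ $[x,y]$ is a closed interval.

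I do not expect any serious obstacle here; the statement is essentially an unwinding of the definition of $Cl$ together with the observation that the $\min$/$\max$ bounds encoded by $potBeg$ and $potEnd$ pinch the reachable set into exactly $[x,y]$. The only point requiring a little care is making the stack argument genuinely rigorous rather than hand-wavy: one must be explicit that passing from $y-1$ to $y$ only adds the single vertex $y$ to the interval, so the ``certificate of failure'' (a reachable vertex strictly below $x$) is, for the minimal such failure, reachable already from the enlarged interval in a way that a larger starting point $x'$ cannot escape. Once that is pinned down, both parts are short.
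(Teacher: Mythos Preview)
Your proposal is correct and, once the meta-commentary is stripped away, essentially matches the paper's proof. For part~2 your argument is identical to the paper's; for part~1 the paper argues the contrapositive (if $x$ fails at step~$y$ then the witness $z<x$ is reachable from the new vertex $y$, hence $z<x'$ forces $x'$ to fail too), which is exactly the ``reachable from $y$'' variant you identify at the end. Your alternative containment argument via $Cl([x,y])=Cl([x,x'-1])\cup Cl([x',y])$ is a valid rephrasing of the same idea in the opposite direction and would also be acceptable, but it is not a genuinely different route---just pick one and present it cleanly.
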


Les us show how to compute the vector $RSplitter$. 
The only difference with the connected intervals of a tree is the pop condition.
In this context, it is easy to check if the top $x$ of the stack is
still a potential beginning for $y$ since we just have to test
if the minimum label reachable from $y$, denoted $MinBelow(y)$ is 
less than $x$ or not.
To perform this test, we first pre-compute, using dynamic programming,
the vector $MinBelow$ in time $O(n+m)$.

\begin{algorithm}[H]
\caption{Computation of RSplitter for Closed Interval of a DAG}
\DontPrintSemicolon

RSplitter $\leftarrow$ [$\infty,\ldots,\infty$] \;
PotBeg.push(1) \;
\For{y from 2 to n}
{
   \While{MinBelow(y) <  PotBeg.top()}
   {
      RSplitter[PotBeg.top()] $\leftarrow$ y \;
      PotBeg.pop()
   }
   PotBeg.push(y)
}
\end{algorithm}

We thus have a linear-time algorithm that computes a generator 
representing the family of closed intervals of a DAG (and also a linear time
algorithm to enumerate all these intervals).

\subsection{Other examples}

Let us present here a non exhaustive list  of problems which can be solved uisng the above  framework. 
In  the following,  $P$ will always denote a permutation,
$T$ a tree, and $D$ a $DAG$ (Direct Acyclic Graph):

\begin{itemize}
   \item \textbf{A}: intervals $I$ such that $P(I)$ is an interval
   \item \textbf{B}: intervals $I$ such that $P(I) = I$
   \item \textbf{C}: intervals $[x,y]$ such that $P([x,y]) \subset [P(x),P(y)]$
   \item \textbf{D}: intervals $[x,y]$ such that $P([x,y])=[P(x),P(y)]$
   \item \textbf{E}: intervals $I$ such that $T[I]$ is connected.
   \item \textbf{F}: intervals $I$ such that $T[I]$ is contained in a path
   \item \textbf{G}: intervals $I$ such that $T[I]$ is a path  
   \item \textbf{H}: intervals $I$ such that $D[I]$ is closed  
\end{itemize}

All of them describe classes that can be represented by a generator.
The class \textbf{D} has played a central role in comparative genomic.
\cite{cabbage_turnip} introduced it to compare the genomes of
cabbage and turnip, the intervals of this class are called \textit{hurdles}.

\vspace{1em}

\begin{tabular}{|c|c|c|}
   \hline
   \textbf{Class} & \textbf{Potential-beginning of $y$} & \textbf{Behaviour of \texttt{PotBeg}} \\ 
   \hline
   A & see Definition \ref{def_pot_beg_path} & Stack \\
   \hline
   B & $x$ s.t. $\min P([x,y]) \geq x$ & Stack \\
   \hline 
   C & $x$ s.t. $\min P[x,y] \geq P(x)$ & Stack \\
   \hline
   D & intersection of A and C & Stack \\
   \hline
   E & $x$ s.t. no $x' < x < z \leq y$ with $x'$ between $x$ and $z$ & Stack \\
   \hline
   F & $x$ s.t. $[x,y]$ aligned in $T$ & Queue \\
   \hline
   G & intersection of E and F & Queue-Stack \\
   \hline
   H & $x$ s.t. $\min Cl([x,y]) \geq x$ & Stack \\
   \hline
\end{tabular}

\vspace{1em}

In all the previous cases, we have obtained 
optimal algorithms both for the computation of a generator
and the enumeration. 
Furthermore 
for the classes \textbf{B} and \textbf{D}, all
the algorithms of the section \ref{tree_dec_perm}
can be adapted to these families. In particular,
we can compute their decomposition tree
in linear time.

\section{\label{tree_dec_perm}Tree Decomposition of common intervals of permutations}

In this section, we address the problem of computing the connected intervals of a path.
This problem is equivalent to the problem of finding the common intervals of two permutations:
given two permutations $P_1, P_2$ of $\{1,\ldots,n\}$, compute the subsets of $\{1,\ldots,n\}$
that appear consecutively both in $P_1$ and $P_2$. 
First remark that we can assume that $P_2 = Id_n = (1,2,\ldots,n)$ 
(by renumbering the elements). 
The problem then becomes: 
given a permutation $P$ of $\{1,\ldots,n\}$, compute the integer intervals $I$ such that 
$P(I)$ is an integer interval. These integer intervals are exactly the connected intervals
of a path over $\{1,\ldots,n\}$ that visits the vertices in the order given by $P$.

In the literature, the prevalent formulation of the problem is the computation
of the common intervals of two permutations.
It appears especially in comparative genomic: 
if the genomes of two species are close, then we expect that important parts 
coincides up to some reordering of the genes. 
It also models the notion of gene cluster: 
several genes that present functional associations are expected to appear 
consecutively.

This problem of finding the common intervals of two permutations 
was introduced by \cite{uno_yag} in 2000. 
They propose an optimal, but complex,
algorithm that enumerates the $K$ common intervals in time $O(n + K)$.
\cite{heb_stoye} introduced the notion of irreducible intervals and obtained an 
$O(kn+K)$ algorithm that outputs all $K$ common intervals a $k$ permutations.
\cite{revisit_uno} introduced the tree structure of this family of 
intervals and presented a linear time algorithm to compute this tree.
 \cite{BCMR08} presented a 
simplest linear time algorithm,
introducing the notion of generator that we use to represent the connected intervals of a tree.

In all the section, we consider the permutation $P$ given by the order in which the
path visits the vertices (since this order is defined up to a reversal, we arbitrarily 
choose one of the two possible directions). The connected intervals $\mathcal{I}$ of our path 
are exactly the intervals $I$ such that $P(I)$ is an interval. 
Representing the permutation $P$ in two dimensions (\ref{square}), 
these connected intervals are represented by "squares."

\begin{figure}[h]
   \begin{center}
      \includegraphics[scale=0.6]{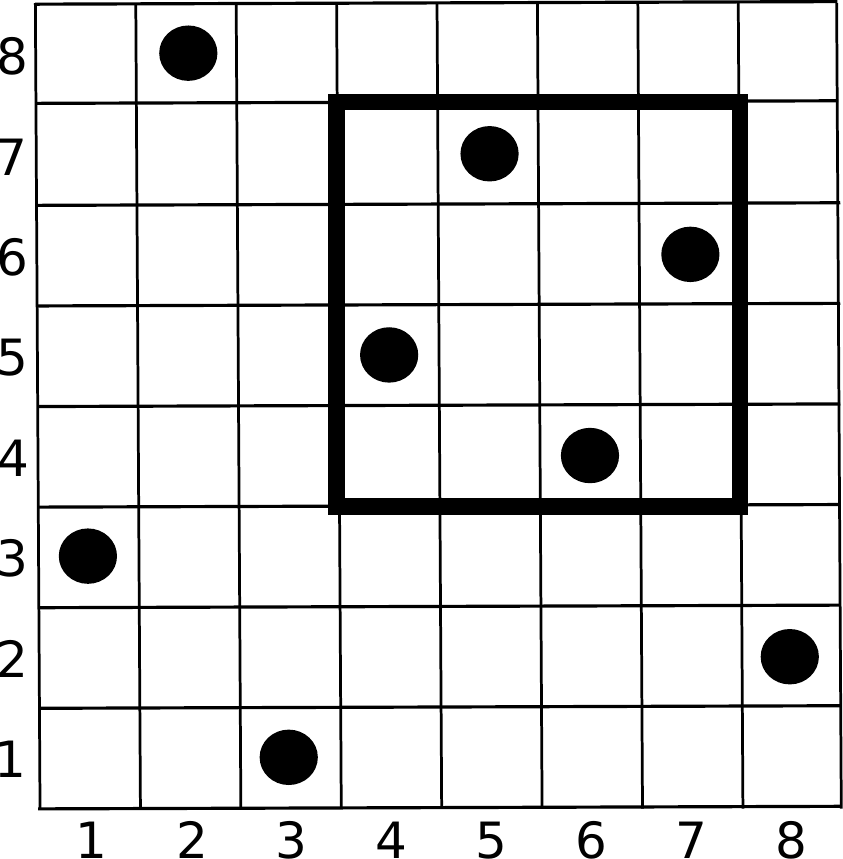}
   \end{center}
   \caption{\label{square}A representation in two dimensions of the permutation $(3,8,1,5,7,4,6,2)$. 
The connected interval $[4,7]$ is represented with a square.}
\end{figure}

In the tree case, $\mathcal{I}$ is closed under union and intersection of its overlapping members.
When $G$ is a path, $\mathcal{I}$ is also closed under the difference of its overlapping members.
A family closed under union, intersection and difference of its overlapping members is called
\textit{weakly partitive} and admits a canonical tree of decomposition \cite{partitive_hyper}.
Since $\mathcal{I}$ is moreover a family of intervals, this tree has a particularly simple structure 
that we will explicit. 

In the previous section, we presented a linear time algorithm to compute a generator 
representing the family $\mathcal{I}$ when $G$ is a tree. We can use this algorithm 
to compute a generator for 
the case of a path and then use the procedure described in \cite{common_k_perm} to create from a generator 
the tree representing a weakly partitive family of intervals.
However, using the specific properties of this special case, we can obtain a simplest 
linear time algorithm that only uses basic data 
structures like stacks and directly builds the tree.
We will use, like in the tree case, the notion of \textit{potential beginning} and \textit{potential end}.

We will first address the problem of checking the simplicity of a permutation. 
A permutation is called \textit{simple} (or \textit{prime}) when all its corresponding 
connected intervals are trivial (i.e.\ of length 1 or $n$).
Simple permutations are of first interests in the combinatorial study of permutation classes. 
In particular, \cite{simple_perm} if the simplicity of a permutation could 
be checked in linear time. Of course, the computation of the tree decomposition of $\mathcal{I}$ 
answers this question. However, we present a very simple linear time algorithm 
to answer it. We afterwards extend this algorithm to build the decomposition tree of $\mathcal{I}$.

\subsection{A very simple algorithm to test the simplicity of a permutation}

Given the permutation $P$, we present a very simple linear time 
algorithm that computes a non trivial connected interval
when there exists one. 
This algorithms will cover the main ideas we will use
to compute the decomposition tree of $\mathcal{I}$.

We will, as in the tree case, consider the elements one by one and  maintain 
the potential beginnings. For convenience, we redefine the notion 
of potential beginning in this new context (even if it coincides with 
the one given inherited from the tree case). 

\begin{definition} [Potential beginning]
\label{def_pot_beg_path}
Given an end $y$, we say that $x$ is a potential beginning for $y$ when 
the following conditions are both satisfied: 
\begin{itemize}
   \item $\nexists z_1 < x < z_2 \leq y, P(x) < P(z_1) < P(z_2)$
   \item $\nexists z_1 < x < z_2 \leq y, P(x) > P(z_1) > P(z_2)$
\end{itemize}
\end{definition}

Only the elements of $potBeg(y)$ have a chance to be the beginning of a 
connected interval ending after $y$. Indeed, if there exists for example
$z_1 < x <z_2$ with $P(x) < P(z_1) < P(z_2)$, then for $y' \geq y$, 
$P([x,y'])$ contains $P(x)$ and $P(z_2)$ but not $P(z_1)$ that is
between $P(x)$ and $P(z_2)$ so $[x,y'] \not \in \mathcal{I}$. 

For the more general case of a tree, we have shown 
that \texttt{PotBeg} behaves like a stack (when we consider the $y$ 
in increasing order) and that the beginnings of the connected intervals
ending at $y$ form a suffix of this stack. So for a given $y$, 
we will just have to check the head of \texttt{PotBeg} to detect if
there is a non trivial connected interval ending at $y$.

\subsubsection{Maintaining the stack}

To maintain the stack, i.e.; update $potBeg(y-1)$ to $potBeg(y)$,
we just have to pop the top $x$ of the stack while there 
exists $z < x$ such that $P(x) < P(z) < P(y)$ or 
$P(x) > P(z) > P(y)$.

In order to check this condition in constant time, 
we precompute for each $x$ the values 
$minGreaterOnLeft[x] = \min \{P(z) | z < x, P(z)>x\}$
and
$maxSmallerOnLeft[x] = \max \{P(z) | z < x, P(z)<x\}$.
This precomputation is a classic one and can be done easily in linear 
time with a stack. 

Precisely, we have to pop $x$ when:
\begin{itemize}
   \item $minGreaterOnLeft[x] < P(y)$
   \item or $maxSmallerOnLeft[x] > P(y)$
\end{itemize}

We can thus check in constant time (using simplest data structures than in the tree case),
if the top of the stack has to be popped.

\subsubsection{Detection of a non trivial connected interval}

Recall that to detect if their exists a non trivial connected interval ending at $y$, we
just have to check if the greater potential beginning $x<y$ of $potBeg(y)$ 
is the beginning of a connected interval ending at $y$.

Denoting $maxi(x,y) = \max \{P(z) | x \leq z \leq y\}$ and 
$mini(x,y) = \min \{P(z) | x \leq z \leq y\}$, we can check if $[x,y] \in \mathcal{I}$ 
by  testing if $maxi(x,y) - mini(x,y) = y - x$. 
In order to compute $maxi(x,y)$ and $mini(x,y)$ when we want to perform this test, 
we have to maintain the maximum and minimum between each pair of consecutive potential beginnings
in the stack \texttt{PotBeg} (as shown in the algorithm).

\begin{algorithm}[H]
\caption{\label{one_ci}Detection of a non trivial connected interval}
\DontPrintSemicolon

\For{y from 1 to n}
{
   mini $\leftarrow$ P(y) \;
   maxi $\leftarrow$ P(y) \;
   \While{PotBeg.size()>0 and (minGreaterOnLeft[PotBeg.top()] <  P(y) or 
          maxSmallerOnLeft[PotBeg.top()] > P(y))}
   {
      mini $\leftarrow$ min(mini, minBefore.top()) \;
      maxi $\leftarrow$ maxi(maxi, maxBefore.top()) \;
      PotBeg.pop(), minBefore.pop(), maxBefore.pop() \;
   }
   x $\leftarrow$ PotBeg.top() \;
   PotBeg.push(y), minBefore.push(mini), maxBefore.push(maxi) \;
   \If{max(maxi, perm[x]) - min(mini,perm[x]) = y-x}
   {
      Return [x,y] \;
   }
}
\end{algorithm}

\subsection{A very simple algorithm to enumerate the connected intervals}

Notice that, since the beginnings of the connected intervals ending 
at $y$ form a suffix of the stack \textit{PotBeg}, we can enumerate all the
$K$ connected intervals in time $O(n + K)$ by replacing the last if-statement
of Algorithm \ref{one_ci} by: 

\begin{function}[H]
\caption{Enumeration of the beginnings()}
\DontPrintSemicolon
iPotBeg $\leftarrow$ PotBeg.size()-1 \;
x $\leftarrow$ PotBeg[iPotBeg] \;
mini $\leftarrow$ P(y) \;
maxi $\leftarrow$ P(y) \;

\While{iPotBeg $\geq$ 0 and
(max(maxi, perm[x]) - min(mini,perm[x]) = y-x)}
{
   Output($[x,y]$) \;
   mini $\leftarrow$ min(mini, minBefore[iPotBeg]) \;
   maxi $\leftarrow$ maxi(maxi, maxBefore[iPotBeg]) \;
   iPotBeg $\leftarrow$ iPotBeg - 1 \;
   \If{iPotBeg $\geq$ 0}
   {
      x $\leftarrow$ PotBeg[iDeb] \;
   }
}

\end{function}

This algorithm is much simpler than the one given by Uno and Yagiura \cite{uno_yag}

\subsection{The tree representation of the family}

Recall that two sets overlap when they intersect without inclusion.
Each time we consider a family $\mathcal{F}$ on a ground set $V$, we assume that
$\mathcal{F}$ contains $V$ and all the singletons of $V$.
A \textit{weakly partitive family} is a family $\mathcal{F}$ such that if $A,B$ are two members of $\mathcal{F}$ 
that overlap then $A \cup B$, $A \cap B$, $A \setminus B$, $B \setminus A$ are in $\mathcal{F}$.
It is easy to check that, when $G$ is a path, $\mathcal{I}$ is a weakly partitive family of intervals. 
\cite{partitive_hyper} showed that a weakly partitive family admits a 
tree representation (of linear size). 

We will here explicit this tree in the case of the family of the connected intervals of a permutation $P$.
The nodes of the tree are given by the overlap-free members of the family: 
a member is overlap-free when it does not overlap 
any other. The family $\mathcal{L}$ of overlap-free members of $\mathcal{I}$ is laminar by definition and 
then can be represented with a tree $\mathcal{T}_\mathcal{L}$ where the parent of 
$u\in\mathcal{L}$ is the smallest $v\in\mathcal{L}$ that strictly contains $u$.
This tree will be labelled in order to represent the whole family $\mathcal{I}$.

\begin{lemma}
If $X \in \mathcal{I} \setminus V$ then $X$ is an 
union of children of the smallest overlap-free member of $\mathcal{I}$ that contains it.
\end{lemma}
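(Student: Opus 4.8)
The plan is to argue purely from the laminar structure of $\mathcal{L}$ together with the minimality of the node we attach $X$ to. Write $v$ for the smallest member of $\mathcal{L}$ containing $X$; this is well defined because any two overlap-free members that both contain the nonempty set $X$ must intersect, hence, being overlap-free, are nested, so the overlap-free members containing $X$ form a chain under inclusion. We may assume $X$ is not itself overlap-free, for otherwise $v = X$ and the statement reduces to the fact (established next) that a node is the union of its children. Thus $v \supsetneq X$.

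First I would record the standard fact that the children of $v$ in $\mathcal{T}_\mathcal{L}$ — the inclusion-maximal members of $\mathcal{L}$ strictly contained in $v$ — form a partition of $v$. Pairwise disjointness is immediate from laminarity together with maximality: two distinct children can be neither nested nor overlapping. Covering uses the standing convention that $\mathcal{I}$ (hence $\mathcal{L}$) contains all singletons: for each $i \in v$ the set $\{i\}$ lies in $\mathcal{L}$ since it overlaps nothing, so it is contained in some inclusion-maximal member of $\mathcal{L}$ strictly below $v$, i.e.\ in some child of $v$.

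The core step is the claim that every child $c$ of $v$ satisfies $c \subseteq X$ or $c \cap X = \emptyset$. Suppose not, so $c \cap X \neq \emptyset$ and $c \not\subseteq X$. Since $c \in \mathcal{L}$, the sets $c$ and $X$ do not overlap; combined with $c \cap X \neq \emptyset$ and $c \not\subseteq X$ this forces $X \subseteq c$, and in fact $X \subsetneq c$ (equality would make $X$ overlap-free, which we excluded). But then $c$ is an overlap-free member of $\mathcal{I}$ containing $X$ with $c \subsetneq v$, contradicting the minimality of $v$. I expect this case analysis — verifying that the alternative $X \subseteq c$ genuinely contradicts minimality, and cleanly disposing of the boundary cases where $X$ equals a child or is itself overlap-free — to be the only delicate point; the rest is bookkeeping with the laminar tree.

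Finally I would assemble the conclusion: by the partition property each $i \in X \subseteq v$ lies in a unique child $c_i$ of $v$, and since $i \in c_i \cap X$ the claim gives $c_i \subseteq X$; hence $X = \bigcup_{i \in X} c_i$ is a union of children of $v$, which is exactly the assertion. Note this argument uses nothing specific to permutations beyond $\mathcal{L} \subseteq \mathcal{I}$, the singleton convention, and the laminarity of $\mathcal{L}$, so it applies verbatim inside any weakly partitive family.
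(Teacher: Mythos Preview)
The paper states this lemma without proof, so there is no argument in the paper to compare against. Your proof is correct and is the natural one: the key step --- that a child $c$ of $v$, being overlap-free, cannot overlap $X \in \mathcal{I}$ and hence is either disjoint from $X$ or contained in it (the alternative $X \subsetneq c$ being excluded by minimality of $v$) --- is exactly the right mechanism, and the surrounding facts (well-definedness of $v$ via the chain property, the partition of $v$ by its children using the singleton convention) are handled cleanly. Your closing remark that only laminarity of $\mathcal{L}$ and the singleton convention are used, so the argument works in any weakly partitive family, is also accurate and worth noting.
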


\begin{definition} [Quotient Family]
Let $u$ be a node of $\mathcal{T}_\mathcal{L}$.
We denote $children(u)$ the list of the children of $u$ in $\mathcal{T}_\mathcal{L}$ 
given from left to right. If $v_1$ and $v_2$ are two distinct children of 
$u$, $P(v_1)$ and $P(v_2)$ are two disjoint intervals.
Denoting $P(v_1) \preceq P(v_2)$ when $max(P(v_1)) < min(P(v_2))$, we thus obtain a 
total order. The \textit{quotient} family $\mathcal{Q}(u)$ of $u$ is defined as the 
permutation given by $\preceq$ on $children(u)$.
\end{definition}

\begin{theorem} [Description of quotients]
Let $u$ be an internal node of $\mathcal{T}_\mathcal{L}$ having $k$ children.
Exactly one of the following assertions holds:
\begin{enumerate}
   \item $\mathcal{Q}(u)$ is the increasing permutation of $\{1,\ldots,k\}$.
   \item $\mathcal{Q}(u)$ is the decreasing permutation of $\{1,\ldots,k\}$
   \item $k \geq 3$ and $\mathcal{Q}(u)$ is simple.
\end{enumerate}
\end{theorem}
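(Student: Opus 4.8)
The plan is to analyze the quotient permutation $\mathcal{Q}(u)$ by exploiting the fact that, by the preceding Lemma, the members of $\mathcal{I}$ strictly contained between $u$ and its children correspond exactly to the \emph{connected intervals of the quotient permutation $\mathcal{Q}(u)$}: a union of consecutive children $v_i, v_{i+1}, \ldots, v_j$ is an element of $\mathcal{I}$ iff it is a union of children of the smallest overlap-free member containing it. Since $u$ itself is overlap-free, any nontrivial such union is either overlap-free (hence a child of $u$ in $\mathcal{T}_\mathcal{L}$, contradicting the fact that the $v_i$ are themselves children of $u$) or it overlaps some other member. From this I would argue that $\mathcal{Q}(u)$, viewed as a permutation of $\{1,\ldots,k\}$, has the property that \emph{every} one of its proper connected sub-intervals of length $\geq 2$ must overlap another, i.e.\ is not overlap-free in $\mathcal{I}_{\mathcal{Q}(u)}$.

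First I would make precise the correspondence between connected intervals of $P$ contained in $u$ and connected intervals of $\mathcal{Q}(u)$, using that each child $v_i$ is mapped by $P$ onto an interval and that the $\preceq$-order on the children is the permutation $\mathcal{Q}(u)$. Then I would observe that, by construction of $\mathcal{T}_\mathcal{L}$, no proper nontrivial union of consecutive children is overlap-free in $\mathcal{I}$ (such a set would itself be a node of $\mathcal{T}_\mathcal{L}$ lying strictly between $u$ and the $v_i$'s, contradicting that the $v_i$'s are the children of $u$). Translating back, $\mathcal{Q}(u)$ is a permutation of $\{1,\ldots,k\}$ \emph{all of whose nontrivial connected intervals overlap}. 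The combinatorial core is then: classify permutations with this property.

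The heart of the argument is the classification step. If $k\le 2$ the permutation is forced to be increasing or decreasing, landing in case (1) or (2). If $k\geq 3$ I would split on whether $\mathcal{Q}(u)$ is simple: if it is, we are in case (3). If it is not simple, then it has a nontrivial connected interval $J$ of length $2\le |J|\le k-1$; combined with the overlap property this $J$ overlaps another connected interval $J'$, and then — using that $\mathcal{I}$ (for a path) is weakly partitive, so $J\cup J'$, $J\cap J'$, $J\setminus J'$ are all connected intervals — I would bootstrap a sliding family of overlapping connected intervals and show it forces \emph{every} singleton-extension to remain connected, which propagates up to show $\mathcal{Q}(u)$ must be the increasing or the decreasing permutation of $\{1,\ldots,k\}$. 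The standard way to see this: a permutation all of whose ``atoms'' overlap and which is not simple is necessarily monotone; this is the permutation-analogue of the fact that in a partitive set family a node with an overlapping pair of blocks is ``degenerate'' (complete) rather than ``prime.'' Finally I would check mutual exclusivity: the increasing permutation on $\ge 2$ elements is not decreasing, neither is simple (all prefixes are connected), and a simple permutation by definition has no nontrivial connected interval so is neither monotone, so exactly one case holds.

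The main obstacle I anticipate is the non-simple case: turning ``has some nontrivial connected interval which overlaps another'' into ``is fully monotone.'' The delicate point is to carry out the sliding/bootstrapping argument cleanly — each time I take a union, intersection or difference of two overlapping connected intervals via weak partitivity, I must verify I obtain a \emph{new} connected interval that still witnesses overlap, and that the process terminates by exhausting $\{1,\ldots,k\}$ and pinning down the full permutation as monotone. I would handle this by choosing the overlapping pair so that their symmetric difference is a singleton at one end, then inducting on $k$; the weakly partitive closure guarantees every intermediate set stays in $\mathcal{I}$, and minimality of $u$'s children guarantees none of these intermediate sets can be overlap-free, which is exactly the inductive hypothesis needed.
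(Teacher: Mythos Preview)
The paper does not actually prove this theorem: it is stated as a structural fact inherited from the general theory of weakly partitive families (the paper cites \cite{partitive_hyper} for the tree representation and then asserts the description of quotients without argument). So there is no paper proof to compare against; your proposal \emph{is} the standard proof of this classical fact, and your outline is correct.

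Two minor comments on the step you rightly flag as delicate. First, you do not need to ``choose'' an overlapping pair whose symmetric difference is a singleton; you can \emph{derive} one. Take a \emph{minimal} nontrivial proper connected interval $J$ of $\mathcal{Q}(u)$. It is not overlap-free, so it overlaps some connected interval $J'$; then $J\cap J'$ and $J\setminus J'$ are connected intervals strictly contained in $J$, hence singletons by minimality, giving $|J|=2$. So some $[a,a+1]$ is a connected interval of $\mathcal{Q}(u)$. Second, once you have one adjacent pair, the propagation is cleanest phrased as: any nontrivial proper connected interval of $\mathcal{Q}(u)$ overlaps another, and the weak-partitive closure of an overlapping pair of intervals always produces intervals that extend or shrink the original by at least one element on each side; iterating from $[a,a+1]$ you reach every $[i,j]\subseteq[1,k]$, which forces $\mathcal{Q}(u)$ to be monotone. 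Your inductive scheme ``on $k$'' is workable but slightly heavier than this direct sliding argument.

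The mutual-exclusivity check you give at the end is correct as stated.
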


\begin{figure}
\begin{center}
\begin{minipage}[c]{0.4\textwidth}
\includegraphics[scale=0.4]{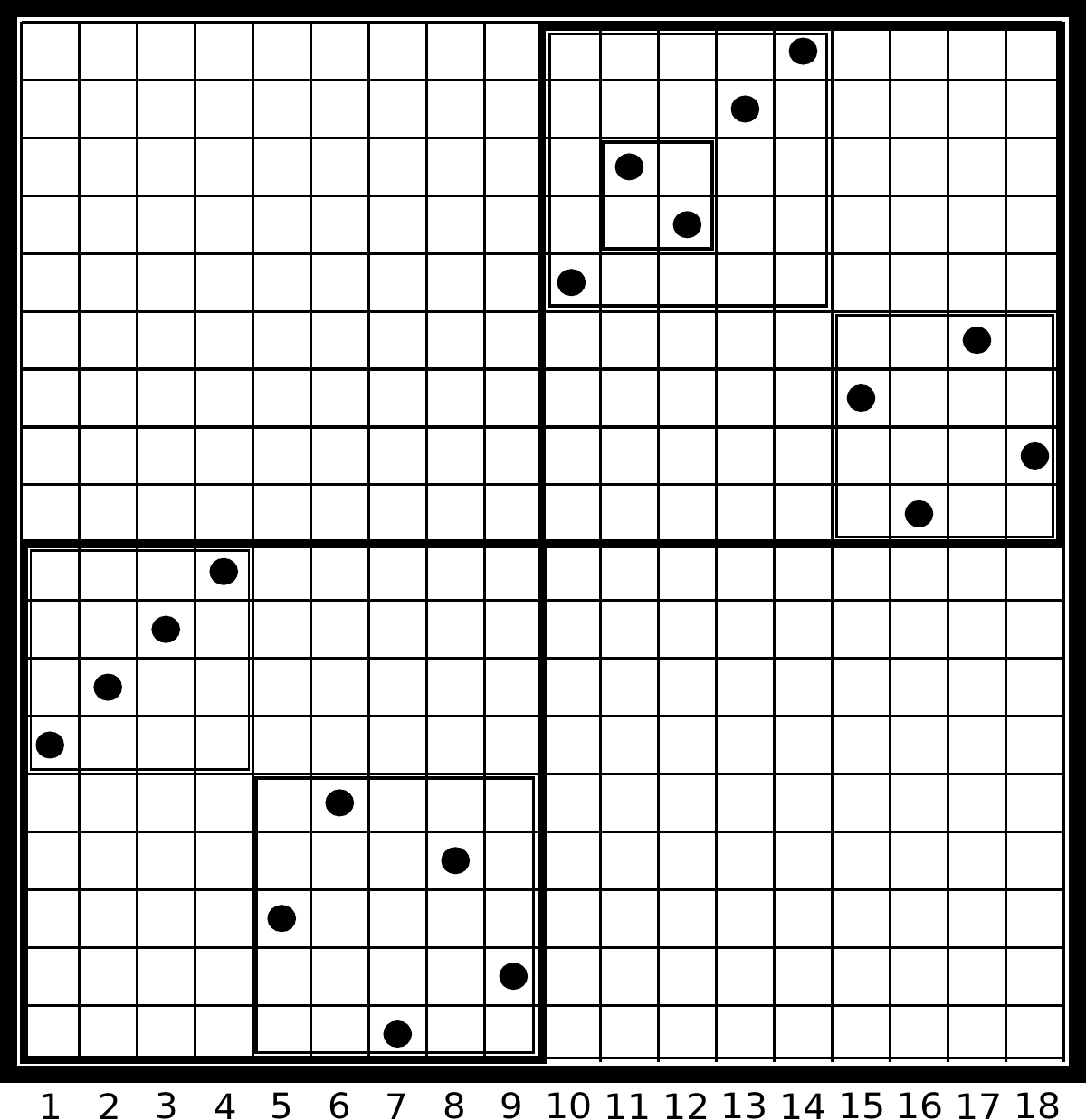}
\end{minipage}
\begin{minipage}[c]{0.4\textwidth}
\includegraphics[scale=0.5]{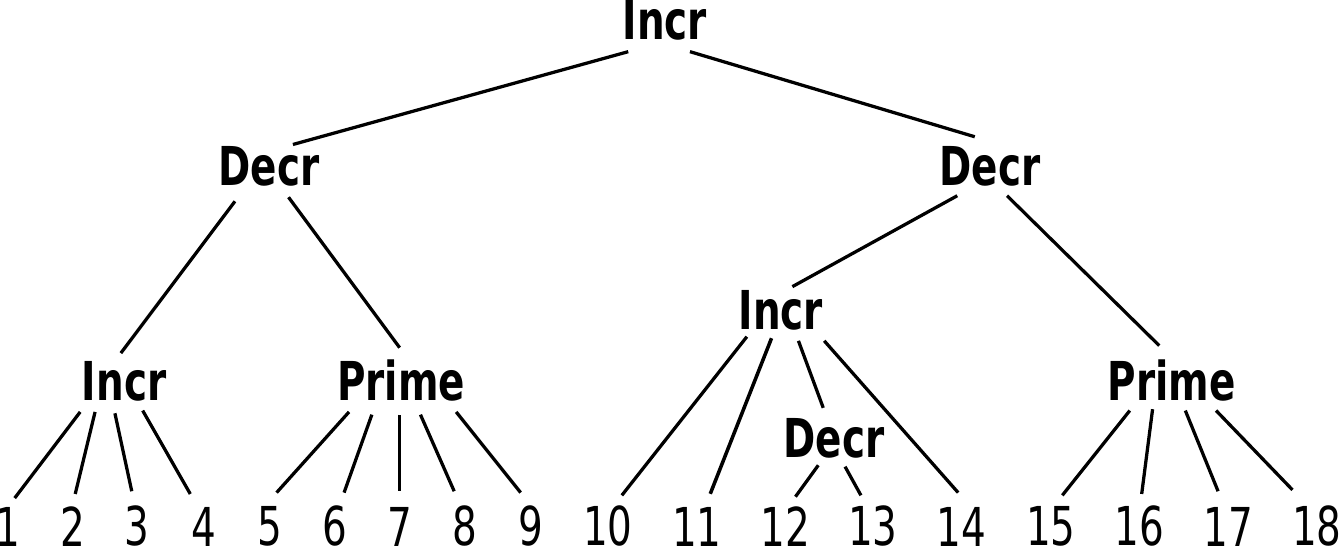}
\end{minipage}
\end{center}
\caption{Tree decomposition of the permutation 
$(6,7,8,9,3,5,1,4,2,14,16,15,17,18,12,10,13,11)$}
\end{figure}

This theorem shows that we can label each internal node of 
$\mathcal{T}_\mathcal{L}$ either as \textit{Prime}, 
\textit{Increasing} or \textit{Decreasing}. From the obtained
labeled tree, we can easily derive the whole family $\mathcal{I}$. 

\subsection{Computation of the tree}

\subsubsection{General description}

Extending the previous algorithm, we present an algorithm
that computes the decomposition tree of $\mathcal{I}$.
Roughly speaking, we can just process as before and contract the connected 
intervals found on the fly.

Each time we contract a connected interval, we will manipulate it
like a singleton. More generally we will speak about \textit{nodes}.
\texttt{PotBeg} is now a stack of nodes and all the variables used 
in the previous algorithms can be adapted to support this notion 
of node.

We have defined above the decomposition tree of a permutation. In fact,
we can define the decomposition forest of any injection from $\{1,\ldots,k\}$ to $\{1,\ldots,n\}$
by considering the overlap-free connected intervals.

The general idea is still to consider the $y$ in increasing order. 
At each step, we maintain the decomposition forest $\mathcal{F}_y$ of the 
restriction of $P$ over $\{1,\ldots,y\}$. We will see how to update
$\mathcal{F}_{y-1}$ to $\mathcal{F}_y$. To to this update, we will have
two operations $tryExtension$ and $tryPrimeCreation$.

\begin{algorithm}[H]
\caption{\label{compTree}Computation of the tree}

\For{y from 1 to n}
{
   S.push(Node({y})) \;
   stable $\leftarrow$ false \;
   \While{not stable}
   {
      stable $\leftarrow$ true \;
      \If{tryExtension() or tryPrimeCreation()}
      {
         stable $\leftarrow$ false \;
      }
   }
}
\end{algorithm}

Remark: the $or$ is lazy so that the priority is given
to the extension case.

\subsubsection{Maintaining the forest of decomposition}

Assume that we have considered $\{1,\ldots,y-1\}$ and computed 
the forest $\mathcal{F}_{y-1}$
of the overlap-free connected intervals in the restriction of $P$
over $\{1,\ldots,y-1\}$. Denoting $A_1, \ldots, A_p$ the trees of 
this forest (numbered from left to right), when we consider $y$ 
we build the node $A$ corresponding to the singleton $y$
and we represent this configuration by the notation $<A_1,\ldots,A_p|A>$.
This configuration will evolve until the update to 
$\mathcal{F}_{y}$ is achieved, $A$ representing the tree
decomposition of the restriction of $P$ on its support.

$S = A_1,\ldots,A_p$ can be seen as a stack ( remark that the stack of 
potential beginnings is a substack of $S$) and we will consider
operations considering $A$ and a suffix of $S$. 
Metaphorically speaking, we can consider that $A$ will 
``eat'' head terms of $S$. More precisely, we consider two
kinds of operations.

\paragraph{Operation 1: Monotonic extension (tryExtension)}
Let us describe the case of an \textbf{increasing extension} 
(the case of a decreasing one being symmetrical). 
Consider $I_1$ the connected interval corresponding to the root of $A_p$
and $I_2$ the one corresponding to the root of $A$. 
When $I_1$ is an increasing node and $\max P(I_1) + 1 = \min P(I_2)$, 
we do the following:
\begin{itemize}
   \item Pop $A_p$ from the stack $S$.
   \item A becomes $Add(A_p,A)$.
\end{itemize}
where $Add(T_1,T_2)$ denotes the operation that
returns the tree obtained by appending $T_2$ at the end of the 
list of children of the root of $T_1$.

\paragraph{Operation 2: Prime super-node creation (tryPrimeCreation)}

This operation can be performed when there exists $0 \leq i \leq p$
such that $A_i,\ldots, A_p, A$ form a prime quotient (if it exists,
such an $i$ is unique and is the top of \texttt{PotBeg}). In this case:

\begin{itemize}
   \item Pop $A_i,\ldots,A_p$ from the stack $S$. 
   \item $A$ becomes the prime node whose children are $(A_i,\ldots,A_p,A)$
\end{itemize}

As long as possible, we use this two operations with priority given
to the first one. Since $p$ always decreases, this process terminates.

\subsubsection{Proof of the correctness}

In the following, when we denote $A_i$ a tree whose leaves
are an interval of integers, $S_i$ will denote this interval
(the support of $A_i$).

The correction of this algorithm mainly comes from 
the following lemma:

\begin{lemma}
\label{lemDec}
Assume that we have a forest $A_1,\ldots,A_m$ such that:
\begin{itemize}
   \item for all $1 \leq i \leq m$, $A_i$ is the decomposition tree of the restriction of $P$ to $S_i$
   \item there are no $1 \leq i <j \leq m$ such that 
$S_i \cup \ldots \cup S_j$ is a connected interval
\end{itemize}
then this forest is the decomposition forest.

\begin{proof}
Assume that the forest $A_1,\ldots,A_m$ satisfies the above conditions.

First, we show that a connected interval does not overlap
any of the supports $S_1, \ldots, S_p$. Assume towards contradiction 
that a connected interval $I$ overlaps $S_i$ and we choose $I$
minimal for inclusion.
Without loss of generality, we assume that $I$ overlaps $S_i$
on the right (i.e.; the beginning of $S_i$ is not in $I$). 
Let $S_i,\ldots,S_j$ be the supports that $I$ intersects.
If $j = i+1$ then we have that $S_i \cup S_j$ is a connected interval, that
contradicts the assumptions. We thus have $j > i+1$.
$I$ does not overlap $S_j$ (if not we would contradict its minimality
by considering $I \setminus S_j$), so $S_j \subset I$.
From this we derive that $I \setminus S_i = S_{i+1} \cup \ldots \cup S_j$
is a connected interval, contradiction.

From this property we have that a connected interval is contained in
one of the supports $S_1,\ldots,S_j$. Consequently, every overlap-free 
connected interval is a node of one of the tree $A_1,\ldots,A_m$, since
each $A_i$ is the decomposition tree of its support. We hence
have the decomposition forest.
\end{proof}
\end{lemma}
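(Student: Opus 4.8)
The goal is Lemma~\ref{lemDec}: given a forest $A_1,\ldots,A_m$ where each $A_i$ is the decomposition tree of $P$ restricted to its support interval $S_i$, and no union of consecutive supports $S_i\cup\cdots\cup S_j$ (with $i<j$) is a connected interval, we must show this forest is \emph{the} decomposition forest. The decomposition forest is by definition the laminar tree of the overlap-free connected intervals. So the statement amounts to two claims: (a) every overlap-free connected interval appears as a node of some $A_i$; and (b) conversely, every node of every $A_i$ is an overlap-free connected interval. Claim~(b) is essentially immediate, since $A_i$ is the decomposition tree of the restriction $P|_{S_i}$ and the overlap-free connected intervals of a sub-permutation whose support is itself a connected interval remain overlap-free connected intervals in the whole permutation. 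The real content is claim~(a).

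\textbf{The key intermediate step.} The engine driving~(a) is the following structural fact, which I would prove first: no connected interval $I$ of $P$ overlaps any support $S_i$. I would argue by contradiction, choosing a counterexample $I$ that overlaps some $S_i$ and is minimal for inclusion among all such counterexamples. Without loss of generality $I$ overlaps $S_i$ "on the right," i.e.\ $I$ misses the left endpoint of $S_i$ but contains its right endpoint. Let $S_i,\ldots,S_j$ be the consecutive supports that $I$ meets. The case $j=i+1$ directly contradicts the hypothesis, since then $I\setminus S_i = S_j$ forces $S_i\cup S_j$ to be a connected interval (an interval that is the disjoint union of a connected interval and an adjacent full support). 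For $j>i+1$: minimality of $I$ forbids $I$ from overlapping $S_j$ (else $I\setminus S_j$ would be a smaller counterexample still meeting $S_i$), so $S_j\subseteq I$; then $I\setminus S_i = S_{i+1}\cup\cdots\cup S_j$ is a connected interval — and since each of these is a full support with $i+1 < j$, this again contradicts the "no consecutive union" hypothesis. Hence no such $I$ exists.

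\textbf{Finishing.} With that non-overlap fact in hand, any connected interval $I$ that is not equal to one of the $S_i$ must be \emph{contained} in a single support $S_i$ (it cannot meet two distinct supports without overlapping one of them, since the supports partition $\{1,\ldots,y\}$ into consecutive blocks and $I$ is itself an interval; meeting two blocks would mean straddling the boundary, i.e.\ overlapping one of them). Therefore every overlap-free connected interval $I$ is contained in some $S_i$, and is then an overlap-free connected interval of $P|_{S_i}$, hence a node of the decomposition tree $A_i$. Combined with claim~(b), the set of nodes of $A_1,\ldots,A_m$ is exactly the set of overlap-free connected intervals, with the correct ancestor/containment structure inherited from each $A_i$; so the forest is the decomposition forest.

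\textbf{Where the difficulty lies.} Claims~(b) and the "contained in a single support" deduction are routine bookkeeping. The crux is the minimal-counterexample argument for the non-overlap property — in particular getting the case analysis on $j$ right and correctly invoking minimality to pass from an overlapping $I$ to the smaller witness $I\setminus S_j$ while keeping it a connected interval (this uses that $\mathcal I$ is closed under set difference of overlapping members, which holds precisely because $G$ is a path). I would make sure to state that closure property explicitly at the point where $I\setminus S_j$ and $I\setminus S_i$ are formed, since it is the one non-trivial ingredient.
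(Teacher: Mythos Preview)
Your proof is correct and follows essentially the same route as the paper: a minimal-counterexample argument to show no connected interval overlaps any $S_i$, with the same case split on whether $I$ meets two or more supports, followed by the containment-in-a-single-support deduction. You are in fact slightly more explicit than the paper in flagging where the weakly-partitive closure (difference of overlapping members) is invoked; one small imprecision is the claim $I\setminus S_i=S_j$ in the $j=i+1$ case (nothing yet forces $S_{i+1}\subseteq I$), but the intended conclusion $S_i\cup S_{i+1}\in\mathcal I$ follows immediately from union closure via $I\cup S_i$ and then, if needed, $(I\cup S_i)\cup S_{i+1}$.
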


\begin{theorem}
If $A_1,\ldots A_m$ are the trees of the decomposition forest of $P$
restricted to $\{1,\ldots,k-1\}$, then $A_1,\ldots, A_p,A$, obtained
with the previous algorithm when adding $k$, is the decomposition
forest of $P$ restricted to $\{1,\ldots,k\}$. 

\begin{proof}
From the previous lemma, if we assume that $A$ is the decomposition
tree on its support then, since $A_1,\ldots,A_p$ are 
the decomposition trees on their support and no prime creation is possible,
we obtain the decomposition forest of $P$ restricted to $\{1,\ldots,k\}$.

We then have to demonstrate that $A$ is the decomposition tree
on its support. Recall that we begin with configuration 
$<A_1,\ldots,A_p|A>$ where $A$ is the tree whose only node is
the singleton $k$. Recall moreover that while we can perform
a monotonic extension or a prime creation, we process it
giving priority to the monotonic extension. $p$ and $A$ thus evolves
until we obtain the final configuration.\\

First, we show that we have the two following invariants:

\begin{itemize}
   \item $(i)$ If $A_p$ is increasing and we can do an
increasing extension, then $A$ is not increasing.
   \item $(ii)$ If $A_p$ is decreasing and we can do a
decreasing extension, then $A$, is not decreasing.
\end{itemize}

We show only $(i)$, ($(ii)$ being symmetrical). Assume that $A_p$
and $A$ are increasing nodes and that we can do an increasing
extension. Let $I$ denotes the rightmost child of the root of $A_p$, and
let $I'$ be the first of the root of $A$. $I \cup I'$ would be a connected 
interval that overlaps the support of $A_p$, it contradicts 
the fact that the root of $A_p$ was overlap-free in the decomposition of the 
restriction of $P$ to $\{1,\ldots,k-1\}$. \\

We now show the following invariant:
$A$ is the tree decomposition on its support. 

$\triangleright$ We demonstrate first that a monotonic extension preserves this
property. As before, we only consider the increasing case. We consider the node 
$A$ just after the increasing extension and let $F_1, \ldots, F_s$
be its children ($F_s$ is so the old value of $A$). 
We now only consider the restriction of $P$ to the support of $A$. Assume 
towards contradiction that
one of the nodes of $A$ is not overlap-free. We obtain that there exists
a connected interval that overlaps $F_s$ but it would contradict the
fact that $F_s$ is not increasing (according to $(i)$). Hence the nodes
of $A$ all correspond to overlap-free connected intervals. Reciprocally,
it is straightforward that all overlap-free connected intervals 
of the restriction of $P$ on the support of $A$ are represented by a node
of $A$.

$\triangleright$ Eventually we demonstrate that a prime super-node creation,
when there is no possible monotonic extension, preserves the invariant too.
Let $F_1,\ldots,F_s$ be the children of the prime node created ($F_s$ is
the old value of $A$). Each $F_i$ is the tree decomposition on its support.
If a connected interval would overlap the support of $F_i$, then it would
also intersect $F_s$ and we could have processed a monotonic extension.
Moreover, there is no connected interval different from the support of $A$
that are an union of several $F_i$. From lemma \ref{lemDec}, we have
have the result.

\end{proof}
\end{theorem}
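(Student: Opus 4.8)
The plan is to proceed by induction on $k$, using Lemma~\ref{lemDec} as the certification tool: to show that $A_1,\ldots,A_p,A$ is the decomposition forest of $P$ restricted to $\{1,\ldots,k\}$, it suffices to establish that (a) each of $A_1,\ldots,A_p$ and $A$ is the decomposition tree of the restriction of $P$ to its own support, and (b) no union $S_i\cup\cdots\cup S_j$ of consecutive supports is a connected interval. Part (b) is exactly the termination condition of the algorithm: the loop runs until neither \texttt{tryExtension} nor \texttt{tryPrimeCreation} applies, and I would argue that the non-applicability of these two operations is equivalent to the absence of any connected interval spanning several of the current supports. The monotonic-extension case rules out a connected interval whose quotient on the spanned supports is increasing or decreasing, and the prime-creation case rules out one whose quotient is simple; by the Description of quotients theorem these are the only possibilities, so (b) follows once the algorithm halts.

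The real work is part (a): maintaining the invariant that $A$ is the decomposition tree on its (growing) support throughout the inner loop. Here I would lean on the auxiliary invariants $(i)$ and $(ii)$ stated in the excerpt — if $A_p$ is increasing and an increasing extension is available, then $A$ itself is not increasing (and symmetrically) — which I would prove by the overlap argument: were both $A_p$ and $A$ increasing with the max-plus-one adjacency holding, the union of the rightmost child of $A_p$'s root with the leftmost child of $A$'s root would be a connected interval overlapping $S_p$, contradicting the fact that the root of $A_p$ was overlap-free in $\mathcal{F}_{k-1}$. Granting $(i)$ and $(ii)$, I would then check that each of the two operations preserves ``$A$ is the decomposition tree on its support'': after an increasing extension producing children $F_1,\ldots,F_s$ with $F_s$ the old $A$, any node of the new $A$ that failed to be overlap-free would force a connected interval overlapping $F_s$, which by $(i)$ contradicts $F_s$ not being increasing; conversely every overlap-free connected interval of the restriction is visibly a node of the new $A$. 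For a prime creation with children $F_1,\ldots,F_s$, a connected interval overlapping some $S(F_i)$ would also meet $F_s$ and hence would have permitted a prior monotonic extension (which had priority and was impossible by hypothesis of this branch), and no connected interval other than the whole support is a union of several $F_i$ because that would contradict the simplicity of the quotient just used; again Lemma~\ref{lemDec} applied inside the support closes the step.

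Putting it together: the base case $k=1$ is trivial (a single leaf). For the inductive step, $A_1,\ldots,A_m$ is by hypothesis the decomposition forest for $\{1,\ldots,k-1\}$; pushing the singleton $k$ and running the inner loop, the invariant above guarantees $A$ is always the decomposition tree on its support, so at loop exit (a) holds, and loop exit together with the quotient-classification gives (b); Lemma~\ref{lemDec} then yields that $A_1,\ldots,A_p,A$ is the decomposition forest for $\{1,\ldots,k\}$, as claimed.

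I expect the main obstacle to be the bookkeeping in the extension step — specifically, verifying that after a \emph{chain} of monotonic extensions the resulting node is genuinely the decomposition tree and not merely ``a'' tree whose leaves form the right interval. The subtlety is that invariant $(i)$ must be re-established after each extension so that the next extension, or the eventual prime creation, can invoke it; one has to be careful that ``$A$ is not increasing'' is precisely what prevents a spurious overlap-free interval from appearing at the seam. The prime-creation step is comparatively clean because priority of extension does most of the work, but one still needs the uniqueness of the index $i$ (it is the top of \texttt{PotBeg}), which should follow from the suffix property of Proposition~\ref{suffix_prop} together with the stack behaviour of \texttt{PotBeg}.
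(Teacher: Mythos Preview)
Your proposal is correct and follows essentially the same route as the paper's proof: reduce via Lemma~\ref{lemDec} to showing that $A$ remains the decomposition tree on its support, establish the auxiliary invariants $(i)$ and $(ii)$ by the rightmost-child/leftmost-child overlap argument, and then verify that each of the two operations preserves the main invariant (monotonic extension via $(i)$, prime creation via the priority rule). Your handling of condition~(b) is in fact slightly more explicit than the paper's---you invoke the Description of quotients theorem to argue that non-applicability of both operations rules out all connected intervals spanning several supports, whereas the paper compresses this into the phrase ``no prime creation is possible''---but the substance is the same.
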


\subsubsection{Complexity}

The linearity of the algorithm comes from the $O(1)$  detection of 
the \textit{monotonic extensions} and \textit{Prime super-nodes creations}
 using the stack of potential beginnings. Moreover,
these two kinds of operations take a constant time to be performed.
Since each of these two kinds of updates create at least one arc
in the final decomposition tree, there is at most $n-1$ such
updates. Hence the whole complexity is $O(n)$.

section{Conclusion}
The framework presented here not only simplify existing algorithms, but it allows to solve optimally new problems as developed in section 3.
We are convinced that this framework can also be applied to improve some algorithms dealing with permutations avoiding some patterns as defined in \cite{P13}.
In a companion paper \cite{BH13} we have studied  the case
where $G$ is a graph and $\Pi=$ "being connected" and develop a variation of this framework using more sophisiticated data structures.

\bibliographystyle{plain}

\begin{thebibliography}{10}

\bibitem{simple_perm}
M.~H. Albert, M.~D. Atkinson, and M.~Klazar.
\newblock The enumeration of simple permutations.
\newblock {\em J. Integer Seq}, 6, 2003.

\bibitem{BBCR04}
Marie-Pierre B{\'e}al, Anne Bergeron, Sylvie Corteel, and Mathieu Raffinot.
\newblock An algorithmic view of gene teams.
\newblock {\em Theor. Comput. Sci.}, 320(2-3):395--418, 2004.

\bibitem{BH13}
Ismael Belghiti and Michel Habib.
\newblock Connected intervals of a graph.
\newblock Technical report, LIAFA, University Paris Diderot, 2013.

\bibitem{BCMR08}
Anne Bergeron, Cedric Chauve, Fabien de~Montgolfier, and Mathieu Raffinot.
\newblock Computing common intervals of k permutations, with applications to
  modular decomposition of graphs.
\newblock {\em SIAM J. Discrete Math.}, 22(3):1022--1039, 2008.

\bibitem{BMLPV05}
Fr{\'e}d{\'e}ric Boyer, Anne Morgat, Laurent Labarre, Jo{\"e}l Pothier, and
  Alain Viari.
\newblock Syntons, metabolons and interactons: an exact graph-theoretical
  approach for exploring neighbourhood between genomic and functional data.
\newblock {\em Bioinformatics}, 21(23):4209--4215, 2005.

\bibitem{revisit_uno}
Binh-Minh Bui-Xuan, Michel Habib, and Christophe Paul.
\newblock Revisiting \textsc{T. Uno} and \textsc{M. Yagiura's} algorithm.
\newblock In {\em ISAAC}, pages 146--155, 2005.

\bibitem{BXHP08}
Binh-Minh Bui-Xuan, Michel Habib, and Christophe Paul.
\newblock Competitive graph searches.
\newblock {\em Theor. Comput. Sci.}, 393(1-3):72--80, 2008.

\bibitem{partitive_hyper}
M.~Chein, Michel Habib, and M.~C. Maurer.
\newblock Partitive hypergraphs.
\newblock {\em Discrete Mathematics}, 37(1):35--50, 1981.

\bibitem{cabbage_turnip}
Sridhar Hannenhalli and Pavel~A. Pevzner.
\newblock Transforming cabbage into turnip: Polynomial algorithm for sorting
  signed permutations by reversals.
\newblock {\em J. ACM}, 46(1):1--27, 1999.

\bibitem{interval_tree}
Steffen Heber and Carla~D. Savage.
\newblock Common intervals of trees.
\newblock {\em Inf. Process. Lett.}, 93(2):69--74, 2005.

\bibitem{heb_stoye}
Steffen Heber and Jens Stoye.
\newblock Finding all common intervals of k permutations.
\newblock In {\em CPM}, pages 207--218, 2001.

\bibitem{P13}
Adeline Pierrot.
\newblock {\em Combinatoire et algorithmique dans les classes de permutations}.
\newblock PhD thesis, University Paris Diderot, 2013.

\bibitem{uno_yag}
Takeaki Uno and Mutsunori Yagiura.
\newblock Fast algorithms to enumerate all common intervals of two
  permutations.
\newblock {\em Algorithmica}, 26(2):290--309, 2000.

\bibitem{cart_tree}
Jean Vuillemin.
\newblock A unifying look at data structures.
\newblock {\em Commun. ACM}, 23(4):229--239, 1980.

\end{thebibliography}

\end{document}